\newtheorem{theorem}{Theorem}
\newtheorem{lemma}[theorem]{Lemma}
\newtheorem{observation}[theorem]{Observation}
\newtheorem{definition}[theorem]{Definition}
\newcommand{\efface}[1]{}
\newcommand{\g}{\gamma}				
\newcommand{\gpw}{\gamma_{\rm{P},1}}		
\newcommand{\gpk}{\gamma_{{\rm P},k}}		
\newcommand{\kPDS}{{$k$-PDS}}
\newcommand{\Pt}[2]{\mathcal{P}_{#1}^{#2}}
\newcommand{\grad}{{\rm rad}_{{\rm P},k}}
\newcommand{\cp}{\, \Box \,}
\author{Paul Dorbec\affiliationmark{1,2} \and
Seethu Varghese\affiliationmark{3} \and A. Vijayakumar\affiliationmark{3}}
\title{Heredity for generalized power domination}
\affiliation{Univ. Bordeaux, LaBRI, France\\
CNRS, LaBRI, France\\
Department of Mathematics, Cochin University of Science and Technology, India}
\keywords{power domination, electrical network monitoring, domination, edge critical graphs, propagation radius}
\begin{document}

\publicationdetails{18}{2016}{3}{5}{1290}

\maketitle
\begin{abstract}
In this paper, we study the behaviour of the generalized power domination number of a graph by small changes on the graph, namely edge and vertex deletion and edge contraction. We prove optimal bounds for $\gamma_{{\rm P},k}(G-e)$, $\gamma_{{\rm P},k}(G/e)$ and for $\gamma_{{\rm P},k}(G-v)$ in terms of $\gpk(G)$, and give examples for which these bounds are tight. We characterize all graphs for which $\gamma_{{\rm P},k}(G-e) = \gamma_{{\rm P},k}(G)+1$ for any edge $e$. We also consider the behaviour of the propagation radius of graphs by similar modifications.   
\end{abstract}

\section{Introduction}
\label{sec:in}
Domination is now a well studied graph parameter, and a classical topic in graph theory. To address the problem of monitoring electrical networks with {\em phasor measurement units} (see~\cite{bamiboad-93}), power domination was introduced as a variation of the classical domination (see \cite{hahehehe-02}). The originality of power domination is the introduction of an additional propagation possibility, relative to the possible use of Kirchhoff's laws in an electrical network. From this propagation, a vertex can end up to be monitored even though it is at a large distance from any vertex selected to carry a phasor measurement unit. The original status of this new parameter and its applied motivation makes a subject of increasing interest from the community. 

\smallskip

All graphs $G = (V(G), E(G))$ considered are finite and simple, that is, without multiple edges or loops. The {\em open neighbourhood} of a vertex $v$ of $G$, denoted by $N_{G}(v)$, is the set of vertices adjacent to $v$. The {\em closed neighbourhood} of $v$ is $N_{G}[v] = N_{G}(v)\cup \{v\}$. For a subset $S$ of vertices, the {\em open} (resp. {\em closed}) {\em neighbourhood} $N_{G}(S)$ (resp. $N_{G}[S]$) of $S$ is the union of the open (resp. closed) neighbourhoods of its elements. 
A vertex $v$ in a graph is said to dominate its closed neighbourhood $N_G[v]$.
A subset $S\subseteq V(G)$ of vertices is a dominating set if $N_G[S]=V(G)$, that is if every vertex in the graph is dominated by some vertex of $S$. The minimum size of a dominating set in a graph $G$ is called its domination number, denoted by $\gamma(G)$. 

We now define the generalized version of power domination, the case when $k=1$ coincides with the original power domination.
For $k$-power domination, we define iteratively a set $\Pt {G,k} i (S)$ of vertices monitored by an initial set $S$ (of PMU). The initial set of vertices monitored by $S$ is defined as the set of dominated vertices $\Pt {G,k} 0 (S) = N_{G}[S]$. This step is sometimes called the domination step. Then this set is iteratively extended by including the whole neighbourhood  of all vertices that are monitored and have at most $k$ non-monitored neighbours. This second part is called the {\em propagation rule}. More formally, we define directly the set of monitored vertices for $k$-power domination following the notation of \cite{chdomora-12} :

\begin{definition}[Monitored vertices]
Let $G$ be a graph, $S \subseteq V(G)$ and $k\ge 0$. The sets $\big(\Pt {G,k} i(S) \big)_{i\ge 0}$ of {\em vertices monitored by $S$ at step $i$} are defined as follows:
\begin{quote}
$\Pt {G,k} 0 (S) = N_{G}[S]$ (domination step), and \\ 
$\Pt {G,k} {i+1} (S) = \bigcup \{ N_{G}[v]\colon v\in \Pt {G,k} i (S)$
    such that $\big|N_{G}[v]\setminus \Pt {G,k} i (S) \big| \le k\}$ (propagation steps).
\end{quote}
\end{definition}

Let us make some observations about this definition. First, the set of monitored vertices is monotone by inclusion, \textit{i.e.} $\Pt {G,k} {i} (S) \subseteq \Pt {G,k} {i+1} (S)$. This is easy to check by induction, using the fact that whenever $N_{G}[v]$ has been included in $\Pt {G,k} {i} (S)$, it is included in $\Pt {G,k} {i+1} (S)$. 
This also implies that $\Pt {G,k} {i} (S)$ is always a union of neighbourhoods. Observe also that if for some integer $i_0$,
$\Pt {G,k} {i_{0}}(S)=\Pt {G,k} {i_{0}+1}(S)$, then $\Pt {G,k} {j} (S)=\Pt {G,k} {i_{0}}(S)$ for all $j\geq i_0$. We thus denote this set $\Pt {G,k} {i_{0}}(S)$ by $\Pt {G,k} {\infty}(S)$. When the graph $G$ is clear from the context, we simplify the notation to $\Pt{k}{i}(S)$ and $\Pt{k}
{\infty}(S)$.

\begin{definition}[$k$-power dominating set]
A set $S$ is a $k$-{\em power dominating set} of $G$ (abbreviated \kPDS) if $\Pt {G,k} {\infty}(S)=V(G)$. The least cardinality of such a set is called the $k$-{\em power domination number} of $G$, denoted by $\gpk(G)$. A $\gpk(G)$-{\em set} is a \kPDS\ in $G$ of cardinality $\gpk(G)$. 
\end{definition}   

Observe that $k$-power domination is also a generalization of domination, that we obtain when we set $k=0$. In \cite{chdomora-12}, the authors showed along with some early results about $k$-power domination that some bounds, extremal graphs and properties can be expressed for any $k$, including the case of domination. In \cite{dohelomora-13}, a bound from \cite{zhkach-06} on regular graphs is also generalized to any $k$. 

The computational complexity of the power domination problem was considered in various papers (\cite{aaz-10, aast-09, gunira-08, hahehehe-02}), in which it was proved to be NP-complete on bipartite and chordal graphs as well as for bounded propagation variants.  Linear-time algorithms are known for computing minimum $k$-power dominating sets in trees (\cite{chdomora-12}) and in block graphs~ (\cite{walech-14}). The problem of characterizing the power domination number of a graph is non trivial for simple families of graphs. Early studies try to characterize it for products of paths/grids \cite{dohe-06,domoklsp-08} though do not reach complete characterization in a few cases. Other studies propose closed formulas for the power domination number in hexagonal grids~(see \cite{FSV-11}) or in Sierpi\'{n}ski graphs~(see \cite{dokl-14}). 

In general, it remains difficult to prove lower bounds on the power domination number of a graph. One reason why it is so is that power domination does not behave well when taking subgraphs. In this paper, we explore in detail the behaviour of the power domination number of a graph when small changes are applied to the graph, \textit{e.g.} removing a vertex or an edge, or contracting an edge. (Recall that the graph obtained by {\em contraction} of an edge $e=xy$, denoted by $G/e$, is obtained from $G-e$ by replacing $x$ and $y$ by a new vertex $v_{xy}$ ({\em contracted vertex}) which is adjacent to all vertices in $N_{G-e}(x)\cup N_{G-e}(y)$.)
In particular, we prove in Section~\ref{sec:domnumb} that though the behaviour of the power domination is similar to the domination in the case of the removal of a vertex, the removal of an edge can decrease the power domination number and the contraction of an edge can increase the power domination number, both phenomena that are impossible in usual domination. We characterize the graphs for which the removal of any edge increases the $k$-power domination number.

Another recent but natural question about power domination is related to the propagation radius. 
In a graph, a vertex that is arbitrarily far apart from any vertex in the set $S$ may eventually get monitored by $S$ as in the case of paths. However, in the applied circumstances of the monitoring of an electrical network, applying too many times Kirchhoff's laws successively would induce an unreasonable cumulated margin of error. With this consideration in mind, it is natural to consider power domination with bounded time constraints, as was first studied in \cite{aaz-10}, and then in \cite{lich-14}. Inspired by this study, the $k$-{\em propagation radius} of a graph $G$ was introduced in \cite{dokl-14} as a way to measure the efficiency of a minimum $k$-power dominating set (\kPDS). It gives the minimum number of propagation steps required to monitor the entire graph over all $\gpk(G)$-sets. 

\begin{definition}
The radius of a \kPDS\ $S$ of a graph $G$ is defined by 
\begin{displaymath}
\grad(G,S) = 1 + \min \{i:\ \Pt {G,k} {i}(S) = V(G)\}\,.
\end{displaymath}
The $k$-{\em propagation radius} of a graph $G$ as defined in \cite{dokl-14} can be expressed as
\begin{displaymath}\grad(G) = \min \{\grad(G,S),\ S\ \text{is a}\ k\text{-PDS\ of}\ G,\ |S| = \gpk(G) \}\,.\end{displaymath}
\end{definition}

\smallskip

We finally recall a few graph notations that we use in the following. We denote by $K_n$ the complete graph on $n$ vertices, by $K_{m,n}$ the bipartite complete graph with partite sets of order $m$ and $n$. The path and cycle on $n$ vertices are denoted by $P_n$ and $C_n$, respectively. 
For two graphs $G$ and $H$, $G\cp H$ denotes the Cartesian product of $G$ and $H$, that is the graph with vertex set $V(G)\times V(H)$ and where two vertices $(g,h)$ and $(g',h')$ are adjacent if and only if either $g=g'$ and $hh'\in E(H)$, or $h=h'$ and $gg'\in E(G)$. 

\section{Variations on the power domination number}\label{sec:domnumb}

Before considering the different cases of vertex removal, edge removal and edge contraction, we propose the following technical lemma which should prove useful. It states that if two graphs differ only on parts that are already monitored, then propagation in the not yet monitored parts behave the same. For a graph $G=(V,E)$ and two subsets $X$ and $Y$ of $V$, we denote by $E_{G}(X,Y)$ the set of edges $uv\in E(G)$ such that $u\in X$ and $v\in Y$. Note that if $X\subseteq Y$, $E_{G}(X,Y)$ contains in particular all edges of the induced subgraph $G[X]$ of $G$ on $X$. All along the rest of the paper, $k$ denotes a positive integer. 

\begin{lemma}\label{l:general}
Let $G=(V_G,E_G)$ and $H=(V_H,E_H)$ be two graphs, $S$ a subset of vertices of $G$ and $i$ a non-negative integer. Define $X = V_G\setminus \Pt {G,k} i (S)$ and the subgraph  $G'$ with vertex set $N_G[X]$ and edge set $E_G(X,N_G[X])$. 

Suppose there exists a subset $Y\subseteq V_H$ such that the subgraph $H'=\left(N_H[Y],E_H(Y,N_H[Y])\right)$ is isomorphic to $G'$ with a mapping $\varphi: N_G[X]\rightarrow N_H[Y]$ that maps $X$ precisely to $Y$. Then, if for some {$k$-power dominating set} $T \subseteq V_H$ and some integer $j$, $Y \subseteq V_H\setminus \Pt {H,k} j (T)$, then $S$ is a \kPDS\ of $G$ and $\grad(G,S)\le i-j+\grad(H,T)$.
\end{lemma}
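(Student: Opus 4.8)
The plan is to show, by induction on $i$, that the "not yet monitored" part of $G$ propagates exactly like the corresponding part of $H$, so that when $T$ monitors all of $H$, the image set $S$ monitors all of $G$. The key structural observation is that the propagation rule only ever looks at vertices $v$ together with their closed neighbourhoods, and a vertex $v$ can only trigger propagation (i.e. has $|N_G[v]\setminus \Pt{G,k}{i'}(S)|\le k$) if $v$ is itself monitored and its interaction with the unmonitored set $X' = V_G\setminus \Pt{G,k}{i'}(S)$ is recorded entirely by the edges in $E_G(X',N_G[X'])$. Since $X'\subseteq X$ for all $i'\ge i$ (monotonicity), and $G'=(N_G[X],E_G(X,N_G[X]))$ is isomorphic to $H'$ via $\varphi$ mapping $X$ to $Y$, the subgraph of $G$ relevant to all further propagation from step $i$ onwards is faithfully mirrored in $H$.

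Concretely, I would first set $X' = V_G \setminus \Pt{G,k}{i}(S)$ (so $X'=X$), and prove the invariant: for every $\ell\ge 0$, a vertex $w\in N_G[X]$ lies in $\Pt{G,k}{i+\ell}(S)$ if and only if either $w\notin \varphi^{-1}(V_H\setminus \Pt{H,k}{j}(T))$ — i.e. $\varphi(w)$ was already monitored in $H$ at step $j$ — or $\varphi(w)\in \Pt{H,k}{j+\ell}(T)$. The base case $\ell=0$ holds because $\Pt{G,k}{i}(S)\cap N_G[X] = N_G[X]\setminus X$, and $\varphi$ maps $X$ to $Y\subseteq V_H\setminus\Pt{H,k}{j}(T)$, with everything in $N_G[X]\setminus X$ being already monitored in $G$ at step $i$. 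For the inductive step, I would check that the condition "$v\in \Pt{G,k}{i+\ell}(S)$ and $|N_G[v]\setminus \Pt{G,k}{i+\ell}(S)|\le k$" transfers across $\varphi$: only vertices $v$ with $N_G[v]\cap X\ne\emptyset$ can have non-monitored neighbours at this stage (all other neighbourhoods are already fully inside $\Pt{G,k}{i}(S)$), such $v$ lie in $N_G[X]$, their full closed neighbourhood is captured by $E_G(X,N_G[X])$ hence by the isomorphism, and the count of non-monitored neighbours is preserved by the invariant at level $\ell$. Applying $N_G[v]$ on the $G$ side corresponds exactly to applying $N_H[\varphi(v)]$ on the $H$ side. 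This yields the invariant at level $\ell+1$.

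Once the invariant is established, take $\ell$ large enough that $\Pt{H,k}{j+\ell}(T) = \Pt{H,k}{\infty}(T) = V_H$ (possible since $T$ is a $k$-PDS of $H$); the invariant then forces $N_G[X]\subseteq \Pt{G,k}{i+\ell}(S)$, and since $\Pt{G,k}{i}(S)\cup N_G[X] = V_G$, we get $\Pt{G,k}{i+\ell}(S) = V_G$, so $S$ is a $k$-PDS of $G$. For the radius bound, let $r = \grad(H,T)-1 = \min\{m : \Pt{H,k}{m}(T)=V_H\}$; choosing $\ell$ so that $j+\ell = \max(j, r)$, and noting $\grad(H,T)\ge j+1$ would give $r\ge j$, hence $\ell = r-j$, we obtain $\Pt{G,k}{i+r-j}(S)=V_G$ and therefore $\grad(G,S)\le 1+(i-j+r) = i-j+\grad(H,T)$. (A small edge case: if $\Pt{G,k}{i}(S)$ is already all of $V_G$, the statement is immediate; and one should note $j\le r$ because $Y\ne\emptyset$ whenever $X\ne\emptyset$, using that $\varphi$ is a bijection $X\to Y$.)

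The main obstacle I anticipate is the bookkeeping in the inductive step: precisely arguing that a vertex $v\notin N_G[X]$ never triggers any new propagation after step $i$ (so that restricting attention to the subgraph $G'$ loses nothing), and that the "$\le k$ non-monitored neighbours" count is read off identically on both sides. This requires being careful that $E_G(X,N_G[X])$ really does contain every edge incident to a vertex of $X$ — which it does, by definition, since any neighbour of a vertex of $X$ lies in $N_G[X]$ — so that the degree of every $v\in N_G[X]$ "towards $X$" is faithfully represented in $G'$ and transported by $\varphi$. Everything else is a routine induction.
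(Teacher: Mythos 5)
Your overall plan (an induction transferring propagation step by step, after observing that every vertex that is still unmonitored in $G$ from step $i$ onwards lies in $X$, so only the subgraph $G'$ matters) is the same as the paper's, but the invariant you propose to establish is an \emph{equivalence}, and that equivalence is false -- already at $\ell=0$. Nothing in the hypotheses forces the boundary vertices $N_H[Y]\setminus Y$ to be monitored in $H$ at step $j$: the assumption is only that $Y$ is unmonitored, whereas every vertex of $N_G[X]\setminus X$ is, by definition of $X$, monitored in $G$ at step $i$. So a boundary vertex $w$ can satisfy $w\in \Pt {G,k} {i} (S)$ while $\varphi(w)\notin \Pt {H,k} {j} (T)$ (take $T$ far from $N_H[Y]$), and more generally $H$ may propagate strictly more slowly than $G$ on $N_H[Y]$; $T$ being a \kPDS\ only guarantees eventual monitoring, not monitoring at matching steps. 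For the same reason, the claims that the ``full closed neighbourhood is captured by $E_G(X,N_G[X])$'', that ``the count of non-monitored neighbours is preserved'', and that applying $N_G[v]$ ``corresponds exactly'' to applying $N_H[\varphi(v)]$ are wrong for $v\in N_G[X]\setminus X$: the isomorphism only transports edges incident to $X$ (resp.\ $Y$), and $\varphi(v)$ may have arbitrarily many further neighbours in $H$, possibly unmonitored, with no counterpart in $G'$. Consequently the inductive step in the direction you describe -- transferring the propagation condition from $G$ to $H$ -- cannot be carried out.

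What is true, and is all that the conclusion needs, is the single implication from $H$ to $G$: setting $X^\ell=X\cap \Pt {G,k} {i+\ell} (S)$ and $Y^\ell=Y\cap \Pt {H,k} {j+\ell} (T)$, one proves $Y^\ell\subseteq\varphi(X^\ell)$ by induction on $\ell$ (this is the paper's proof). In the inductive step one starts from a vertex $v=\varphi(u)$ of $N_H[Y]$ that propagates in $H$ at step $j+\ell$, checks that $u$ is monitored in $G$ at step $i+\ell$ (either $u\notin X$, or $u\in X$ and then $v\in Y^\ell$, so $u\in X^\ell$ by induction), and maps each still unmonitored $G$-neighbour $u'$ of $u$ -- which necessarily lies in $X\setminus X^\ell$, so the edge $uu'$ does belong to $G'$ -- to $\varphi(u')\in Y\setminus Y^\ell$, an unmonitored $H$-neighbour of $v$. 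This yields only an injection, hence the inequality $\bigl|N_G[u]\setminus \Pt {G,k} {i+\ell} (S)\bigr|\le \bigl|N_H[v]\setminus \Pt {H,k} {j+\ell} (T)\bigr|\le k$, which suffices for $u$ to propagate in $G$. Your concluding computation of the radius bound (taking $\ell=\grad(H,T)-1-j$) is fine once the invariant is weakened to this one-sided containment, and the key observation you state at the end -- every edge incident to a vertex of $X$ is present in $E_G(X,N_G[X])$ -- is exactly the right ingredient; it just has to be used in this one-directional form rather than as an exact correspondence of neighbourhoods and counts.
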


\begin{proof}
For $\ell\ge 0$, denote by $X^\ell$ and $Y^\ell$ respectively the sets $X \cap \Pt {G,k} {i+\ell} (S)$ and $Y \cap \Pt {H,k} {j+\ell} (T)$.
We prove by induction that for all $\ell$, $Y^\ell \subseteq \varphi(X^\ell)$.

 By hypothesis, $X^0 = \emptyset$ and so $\varphi(X^0) = \emptyset = Y^0$, so it holds for $\ell=0$. Now assume that the property is true for some $\ell\ge 0$. Suppose that some vertex $v=\varphi(u)\in N_H[Y]$ satisfies the conditions for propagation in $H$ at step $j+\ell$, \textit{i.e.} $v\in \Pt {H,k} {j+\ell} (T)$ and $|N_H[v]\setminus \Pt {H,k} {j+\ell} (T)|\le k$. We show that $u$ also satisfies the conditions for propagation in $G$. First, remark that $u$ is monitored at step $i+\ell$: indeed, if $u\notin X$, then by definition of $X$, $u \in \Pt {G,k} {i+\ell}$, otherwise if $u \in X$, then $v \in Y \cap \Pt {H,k} {j+\ell} (T) = Y^\ell$, and thus by induction, $u \in X^\ell \subseteq \Pt {G,k} {i+\ell}$. Now consider any neighbour $u'$ of $u$ not yet dominated. Then $u'\in X \setminus X^\ell $ and $\varphi(u')\in Y\setminus Y^\ell$. Moreover, by the isomorphism between $G'$ and $H'$, $\varphi(u')$ is also adjacent to $v$, and was among the at most $k$ non monitored neighbours of $v$ in $H$. Therefore, $u$ has at most $k$ non monitored neighbours in $G$, and also propagates in $G$. Applying this statement to all vertices in $G'$, we infer that $Y^{\ell+1} \subseteq \varphi(X^{\ell+1})$. By induction, this is also true for $\ell = \grad(H,T)-j-1$, and  we deduce that 
 \begin{displaymath}X=\varphi^{-1}(Y) \subseteq \varphi^{-1}\Big( Y^\ell = \big(\Pt {H,k} {\grad(H,T)-1} (T)\cap Y\big)\Big) \subseteq \Big(X^\ell = \big(\Pt {G,k} {\grad(H,T)-j+i-1} (S)\cap X\big)\Big)\,,\end{displaymath}
 and thus that $S$ is a $k$-power dominating set of $G$ and $\grad(G,S)\le i-j+\grad(H,T)$.
\end{proof}

We now use this lemma to state how the \(k\)-power domination number of a graph may change with atomic variations of the graph.

\subsection{Vertex removal}
We denote by \(G-v\) the graph obtained from \(G\) by removing a vertex $v$ and all its incident edges. Similar to what happens for domination (see \cite{hahesl-98}), we have the following:

\begin{theorem}\label{o:propG-v}
Let $G$ be a graph and $v$ be a vertex in $G$. There is no upper bound for $\gpk(G-v)$ in terms of $\gpk(G)$. On the other hand, we have $\gpk(G-v)\ge \gpk(G)-1$. Moreover, if $\gpk(G-v)= \gpk(G)-1$, then $\grad(G)\le \grad(G-v)$.
\end{theorem}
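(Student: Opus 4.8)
The plan is to handle the three assertions in order, using Lemma~\ref{l:general} for the two non-trivial ones. For the first assertion (no upper bound), I would exhibit a family of graphs $G$ with $\gpk(G)$ bounded --- say, equal to $1$ --- but $\gpk(G-v)$ arbitrarily large. A natural candidate is a ``spider-like'' or star-like construction: take a central vertex $v$ attached to many structures each of which, on its own, needs a PMU, but which are all monitored for free once $v$ is chosen and propagation starts. Concretely, one can adapt the standard examples for ordinary domination (see~\cite{hahesl-98}): attach to $v$ many disjoint copies of a gadget whose $k$-power domination number is positive and which gets fully monitored from $v$ via the domination step plus one propagation step, so that $\gpk(G)=1$ while $G-v$ is a disjoint union of many such gadgets and $\gpk(G-v)$ grows linearly with the number of copies.

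For the lower bound $\gpk(G-v)\ge\gpk(G)-1$, let $S$ be a $\gpk(G-v)$-set and consider $S'=S\cup\{v\}$ in $G$. I claim $S'$ is a \kPDS\ of $G$: every vertex $w\ne v$ of $G-v$ is monitored by $S$ in $G-v$, and adding the vertex $v$ (together with its incident edges) to the graph only enlarges closed neighbourhoods and hence can only help propagation --- formally, $\Pt{G-v,k}{i}(S)\subseteq \Pt{G,k}{i}(S')$ by a straightforward induction on $i$, since each propagation step valid in $G-v$ remains valid in $G$ once $v$ is already monitored (being in $S'$). Hence $V(G-v)\subseteq \Pt{G,k}{\infty}(S')$, and $v\in S'$ is monitored too, so $S'$ is a \kPDS\ of $G$ of size $|S|+1=\gpk(G-v)+1$, giving $\gpk(G)\le \gpk(G-v)+1$.

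For the radius inequality under equality $\gpk(G-v)=\gpk(G)-1$, I would apply Lemma~\ref{l:general} with the roles reversed: take $H=G-v$, and let $T$ be a $\gpk(G-v)$-set realizing $\grad(G-v)=\grad(G-v,T)$, chosen so that the last-monitored part of $G-v$ is captured cleanly. Actually the cleaner route is to take $H=G-v$ with $T$ a minimum \kPDS\ of $G-v$ of optimal radius, and $G$ itself with the set $S = T\cup\{v\}$; we already know from the previous paragraph that $S$ is a \kPDS\ of $G$, and since $|S| = \gpk(G-v)+1 = \gpk(G)$, it is a $\gpk(G)$-set. The point is then to bound $\grad(G,S)$ by $\grad(G-v,T)$: once $v\in S$ is placed, the domination step in $G$ monitors at least $N_G[T]\cup N_G[v] \supseteq N_{G-v}[T]$, and every subsequent propagation step available in $G-v$ is available in $G$ at the same step, so $\Pt{G-v,k}{i}(T)\subseteq \Pt{G,k}{i}(S)$ for all $i$; since $T$ monitors all of $V(G-v)$ within $\grad(G-v,T)-1$ propagation steps and $v$ is already monitored, $S$ monitors all of $V(G)$ within the same number of steps, so $\grad(G,S)\le \grad(G-v,T) = \grad(G-v)$, and therefore $\grad(G) \le \grad(G,S) \le \grad(G-v)$.

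The main obstacle is the first assertion: one must produce an explicit gadget whose $k$-power domination number is positive yet which is fully monitored from a single external attachment vertex in a bounded number of steps, and verify both facts for general $k\ge 1$; the monotonicity arguments in the other two parts are routine inductions of the type already used in the proof of Lemma~\ref{l:general}. A secondary subtlety is making sure, in the radius part, that the chosen $T$ indeed realizes $\grad(G-v)$ and that the inclusion $\Pt{G-v,k}{i}(T)\subseteq \Pt{G,k}{i}(S)$ is genuinely monotone at every step --- the only thing to check is that adding the already-monitored vertex $v$ never removes a vertex from the set of eligible propagators, which holds because a vertex's number of non-monitored neighbours can only stay the same or decrease when $v$ and its edges are present and $v$ is monitored.
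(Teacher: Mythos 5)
Your proposal is correct and follows essentially the same route as the paper: the paper likewise takes a minimum \kPDS\ $T$ of $G-v$ of optimal radius, sets $S=T\cup\{v\}$, and gets both $\gpk(G)\le\gpk(G-v)+1$ and $\grad(G,S)\le\grad(G-v)$ in one stroke via Lemma~\ref{l:general} (your step-by-step inclusion $\Pt{G-v,k}{i}(T)\subseteq\Pt{G,k}{i}(S)$ is just that lemma unrolled in the special case $i=j=0$). The ``main obstacle'' you flag for the unboundedness part is not one: your gadget can be a single leaf, i.e.\ the paper simply takes $K_{1,n}$, where deleting the centre raises $\gpk$ from $1$ to $n$.
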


\begin{proof}
We first prove the lower bound, using Lemma~\ref{l:general}. We define $H=G-v$ with the obvious mapping $\varphi$ from $V(G)\setminus {v}$ to $V(H)$.  Let $T$ be a power dominating set of $H=G-v$, that induces the minimum propagation radius. Then for the set $S=T\cup \{v\}$, the conditions of Lemma~\ref{l:general} hold already from $i=0$ and $j=0$ and the bound follows. Moreover, we also get that $\grad(G,S)\le j-i+\grad(H,T) = \grad(G-v)$\,.
For proving there is no upper bound for $\gpk(G-v)$ in terms of $\gpk(G)$,  we can consider the star with $n$ leaves $K_{1,n}$, for which the removal of the central vertex increases the $k$-power domination number from 1 to $n$.
\end{proof}

We now describe examples that tighten the lower bound of the above theorem or illustrate better the absence of upper bound (in particular for graphs that remain connected). A first example for which the tightness of the lower bound can be observed is the $4\times 4$ grid $P_4\cp P_4$, for which we get $\gpw(P_4\cp P_4)=2$ (see \cite{dohe-06}) and $\gpw((P_4\cp P_4)-v)=1$ for any $v$. Simple examples for larger $k$ are the graphs $K_{k+2,k+2}$, for which the removal of any vertex drops the $k$-power domination number from 2 to 1 (those were the only exceptions in \cite{dohelomora-13}), as well as the complete bipartite graph $K_{k+3,k+3}$ minus a perfect matching. 

We now describe infinite families of graphs to illustrate these bounds. The family of graphs $D_{k,n}$ was defined in \cite{chdomora-12}. It is made of $n$ copies of $k+3$-cliques minus an edge, organized into a cycle, and where the end-vertices of the missing edges are linked to the corresponding vertices in the adjacent cliques in the cycle (see Fig.~\ref{fig.G-vinc}). Note that $\gpk(D_{k,n})=n$, as each copy of $K_{k+3}-e$ must contain a vertex of a $k$-power dominating set. Its propagation radius is 1 since $D_{k,n}$ has a dominating set of size $n$. The removal of an end-vertex of the edges linking two cliques (\textit{e.g.} $u$ in Fig.~\ref{fig.G-vinc}) does not change its $k$-power domination number, but the removal of any other vertex (\textit{e.g.} $v$ in Fig.~\ref{fig.G-vinc}) decreases it by one, and increases the propagation radius from 1 to 2. So this forms an infinite family tightening the lower bound for any value of $k$ and $\gpk(G)$.

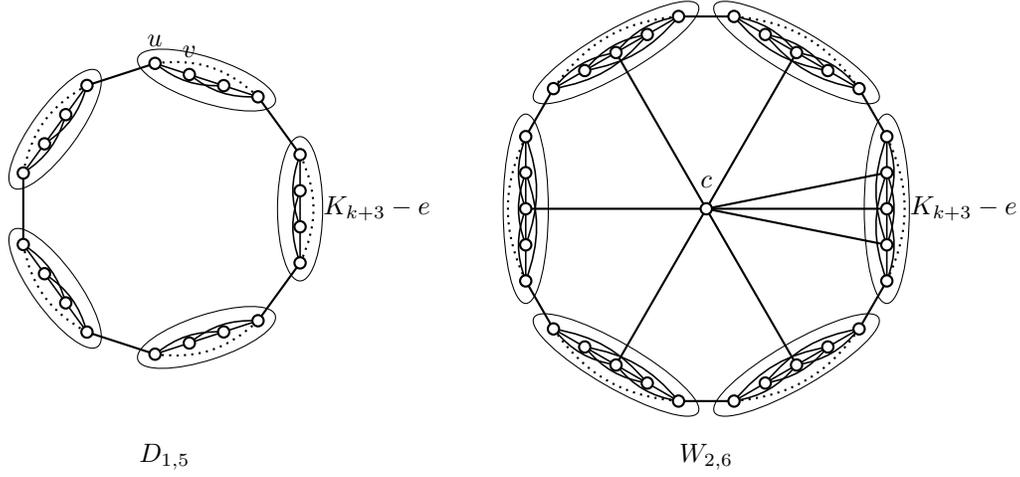
\begin{figure}
\begin{center}
\begin{tikzpicture}[thick,scale=0.6,
	vertex/.style={circle,draw,inner sep=0pt,minimum size=1.5mm,fill=white!100}]

\def\k {3}; 
\def\n {5};


\coordinate (c) at (0,0);

\foreach \i in {1,...,\n}{ 
   \path (\i*360/\n:\k) ++ (90+\i*360/\n:\k/2*0.8) coordinate (x\i0);
   \foreach \j in {1,...,\k}{
     \path (x\i0) ++ (90+\i*360/\n:-0.8*\j) coordinate (x\i\j);
     }
   \draw[thin] (\i*360/\n:\k) circle [x radius=0.5, y radius=\k*0.5+0.1, rotate=\i*360/\n];
   }
 
\draw (x10)--(x2\k) (x20)--(x3\k) (x30)--(x4\k) (x40)--(x5\k) (x50)--(x1\k);

\foreach \i in {1,...,\n}{ 
		\draw[semithick] (x\i0) -- (x\i\k);
		\draw[semithick] (x\i0) to[bend right=20] (x\i2);
		\draw[semithick] (x\i1) to[bend right=20] (x\i3);
		\draw[dotted]  (x\i0) to[bend left=25] (x\i\k);
        }


\foreach \i in {1,...,\n}{ 
	\foreach \j in {0,...,\k}{ 
		\draw (x\i\j) node[vertex] {};
		}
	}

\path (x\n2) ++(1.7,0.4) node {$K_{k+3}-e$};
\path (x10) +(0,0.5) node {$u$};
\path (x11) +(0,0.5) node {$v$};


\begin{scope}[xshift=12cm]

 \def\k {4}; 
 \def\n {6};


 \coordinate (c) at (0,0);

\foreach \i in {1,...,\n}{ 
   \path (\i*360/\n:\k) ++ (90+\i*360/\n:\k/2*0.8) coordinate (x\i0);
   \foreach \j in {1,...,\k}{
     \path (x\i0) ++ (90+\i*360/\n:-0.8*\j) coordinate (x\i\j);
     }
   \draw[thin] (\i*360/\n:\k) circle [x radius=0.5, y radius=\k*0.5+0.1, rotate=\i*360/\n];
   }
 
 \draw (x10)--(x2\k) (x20)--(x3\k) (x30)--(x4\k) (x40)--(x5\k) (x50)--(x6\k) (x60)--(x1\k);

 \foreach \i in {1,...,\n}{ 
		\draw (c) -- (x\i2);
		\draw[semithick] (x\i0) -- (x\i\k);
		\draw[semithick] (x\i0) to[bend left=20] (x\i2);
		\draw[semithick] (x\i0) to[bend right=20] (x\i3);
		\draw[semithick] (x\i1) to[bend left=20] (x\i3);
		\draw[semithick] (x\i1) to[bend right=20] (x\i4);
		\draw[semithick] (x\i2) to[bend left=20] (x\i4);
 		\draw[dotted]  (x\i0) to[bend left=25] (x\i\k);
 }

 \draw (c) -- (x\n1);
 \draw (c) -- (x\n3);
 

 \foreach \i in {1,...,\n}{ 
	\foreach \j in {0,...,\k}{ 
		\draw (x\i\j) node[vertex] {};
		}
	}

 \draw (c) node[vertex] {};
 \path (c) +(0,0.6) node {$c$};
 \path (x\n2) ++(1.7,0) node {$K_{k+3}-e$};

\path (0,-\k-1.5) node {$W_{2,6}$};

\end{scope}

\path (0,-\k-2.5) node {$D_{1,5}$};

\end{tikzpicture}

 \caption{The graphs $D_{k,n}$ and $W_{k,n}$ obtained by the addition of vertex $c$.}
  \label{fig.G-vinc}
\end{center}
\end{figure}

Now, an infinite family of graphs proving the absence of a upper bound is a generalization $W_{k,n}$ of the wheel (depicted in Fig.~\ref{fig.G-vinc}). It is made of $D_{k,n}$ together with a vertex $c$ adjacent to three vertices of degree $k+2$ in one particular clique and to one vertex of degree $k+2$ in all the other cliques. Observe that for $n\ge k+2$, $\{c\}$ is the only power dominating set of $W_{k,n}$ of order 1, and thus we get $\grad(W_{k,n})=\grad(W_{k,n},\{c\})=2+3\lfloor \frac{n-1}{2}\rfloor+2((n-1) \bmod{2})$. The removal of $c$ induces the graph $D_{k,n}$, increasing the $k$-power domination number from 1 to $n$, and dropping the propagation radius from roughly $\frac{3n}{2}$ to 1.

\medskip

More constructions could be proposed to show that the propagation radius of a graph can evolve quite freely when a vertex is removed, and there is little hope for other bounds on this parameter when a vertex is removed. The most unlikely example is that the removal of a vertex increase both the $k$-power domination number and the propagation radius by unbounded value. This is possible with the following variation on $W_{k,pn}$. Consider $pn$ subgraphs $(H_i)_{0\le i<pn}$, all isomorphic to a clique minus an edge, on $k+3$ vertices when $i\equiv 0 \bmod p$ and on $k+1$ vertices otherwise. We again connect the end-vertices of the missing edges in the clique into a cycle joining $H_i$ to $H_{i+1 \pmod{pn}}$, and add a vertex $c$ adjacent to three vertices of degree $k+2$ in all copies $H_i$ when $i\equiv 0 \bmod p$, and to one vertex of degree $k$ in all the other copies. Then $\{c\}$ is a $k$-power dominating set of $G$ inducing a propagation radius of 2. On the other hand, $\gpk(G-c)=n$ (one vertex is needed in each $H_i, i\equiv 0 \bmod p$) and has propagation radius $1+3\lfloor \frac{p-1}{2}\rfloor+2((p-1) \bmod{2})$.

\subsection{Edge removal}

In a graph $G$, removing an edge $e$ can never decrease the domination number. More generally, we have that $\g(G)\le \g(G-e) \le \g(G)+1$. However, the removal of an edge can decrease the $k$-power domination number as stated in the following result. Indeed, it may happen that the removal of one edge allows the propagation through another edge incident to a common vertex, and thus decreases the power domination number.

\begin{theorem}\label{t:propG-e}
Let $G$ be a graph and $e$ be an edge in $G$. Then 
\begin{displaymath}\gpk(G)-1\le \gpk(G-e)\le \gpk(G)+1\,.\end{displaymath}
Moreover,
\begin{displaymath}
\begin{cases}
\mbox{if }\gpk(G)-1 = \gpk(G-e)\mbox{, then } \grad(G)\le \grad(G-e)\\
\mbox{if }\gpk(G-e) = \gpk(G)+1\mbox{, then } \grad(G-e)\le \grad(G).
\end{cases}\end{displaymath}

\end{theorem}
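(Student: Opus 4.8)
The plan is to prove each of the two inequalities $\gpk(G-e)\ge \gpk(G)-1$ and $\gpk(G-e)\le \gpk(G)+1$, together with the corresponding propagation radius comparisons, by exhibiting suitable power dominating sets and invoking Lemma~\ref{l:general} with carefully chosen parameters.

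For the \textbf{lower bound} $\gpk(G-e)\ge \gpk(G)-1$: let $e=xy$ and let $T$ be a $\gpk(G-e)$-set realizing $\grad(G-e)$. I claim $S=T\cup\{x\}$ is a \kPDS\ of $G$ with $\grad(G,S)\le \grad(G-e)$. The cleanest way is to apply Lemma~\ref{l:general} with $H=G-e$, $i=0$ (so $X=V_G\setminus N_G[S]$), and $j=0$. One must check the isomorphism hypothesis: since $x\in S$, the vertex $x$ and both endpoints' incident edges $xy$ are already inside $N_G[S]$, so the edge $e$ lies entirely in the ``already monitored'' part, hence $G'$ (on the unmonitored vertices and their boundary) is literally a subgraph of $G-e$, and the identity map serves as $\varphi$ with $Y=X$. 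Because $T$ is a \kPDS\ of $H=G-e$ and $Y\subseteq V_H$ (trivially $Y\subseteq V_H\setminus\Pt{H,k}{0}(T)$ is not automatic — one should instead take $j$ so that $Y\subseteq V_H\setminus\Pt{H,k}{j}(T)$, e.g. intersect appropriately, or simply note $Y=X$ consists of vertices not dominated by $S\supseteq$ the relevant part and argue directly), the lemma yields that $S$ is a \kPDS\ of $G$, so $\gpk(G)\le |S|\le \gpk(G-e)+1$, and $\grad(G)\le\grad(G,S)\le 0-0+\grad(G-e)$. This handles the first case of the ``moreover''.

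For the \textbf{upper bound} $\gpk(G-e)\le \gpk(G)+1$: let $S$ be a $\gpk(G)$-set realizing $\grad(G)$, and set $T=S\cup\{x\}$ where $e=xy$. The symmetric argument — now with $H=G$ and $G\leftrightarrow H$ roles swapped, applying Lemma~\ref{l:general} in the direction from $G-e$ to $G$ — should give that $T$ is a \kPDS\ of $G-e$ of size $\gpk(G)+1$ with $\grad(G-e,T)\le\grad(G)$, proving $\gpk(G-e)\le\gpk(G)+1$ and the second case of the ``moreover'', namely that when equality $\gpk(G-e)=\gpk(G)+1$ holds, $\grad(G-e)\le\grad(G-e,T)\le\grad(G)$. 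The point is that adding $x$ to the dominating set absorbs the lost edge $e=xy$ into the domination step, so no propagation is disrupted.

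The \textbf{main obstacle} is bookkeeping the hypotheses of Lemma~\ref{l:general} precisely, in particular matching the index $j$ on the $H$-side to the index $i$ on the $G$-side so that the subgraphs induced on the still-unmonitored vertices genuinely correspond under $\varphi$. One must verify that after performing the domination step with the augmented set, the edge $e$ has both endpoints monitored, so that the ``active'' subgraph where propagation still matters is identical in $G$ and $G-e$; once this is established the isomorphism is the identity and the radius bookkeeping is immediate from the inequality $\grad(G,S)\le i-j+\grad(H,T)$ in the lemma. A minor subtlety is that the two directions of the theorem are genuinely separate applications of the lemma (it is not symmetric in $G$ and $H$), so both need to be written out, but neither requires more than the observation above.
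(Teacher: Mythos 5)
Your lower-bound argument ($\gpk(G)-1\le\gpk(G-e)$) is correct and is essentially the paper's: add one endpoint of $e$ to a minimum \kPDS\ $T$ of $G-e$ and apply Lemma~\ref{l:general} with $i=j=0$. Your hedging about the hypothesis $Y\subseteq V_H\setminus \Pt{H,k}{0}(T)$ is unnecessary: every vertex of $X=V(G)\setminus N_G[T\cup\{x\}]$ is in particular undominated by $T$ in $G$, hence also in $G-e$, so with $Y=X$ and the identity map the hypothesis holds automatically (and since $x,y\notin X$, the edge $e$ does not appear in $E_G(X,N_G[X])$, so the two boundary subgraphs coincide).

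The upper bound, however, has a genuine gap. You take a $\gpk(G)$-set $S$ and add an \emph{arbitrary} endpoint $x$ of $e=xy$, claiming that adding $x$ ``absorbs the lost edge into the domination step, so no propagation is disrupted.'' That is false: the vertex whose monitoring can be disrupted is $y$, which in $G$ may be dominated, or later propagated to, only through the edge $xy$ from $x$; adding $x$ to the set does not help $y$ in $G-e$, where $x$ and $y$ are no longer adjacent. Concretely, for any $k\ge 1$ take $G=P_4$ with vertices $a,b,c,d$, the $\gpk(G)$-set $S=\{b\}$ and $e=cd$: then $S\cup\{c\}$ is not a \kPDS\ of $G-e$ (the vertex $d$ is isolated there), whereas $S\cup\{d\}$ is. Correspondingly, your application of Lemma~\ref{l:general} with $i=j=0$ cannot be completed: if $y$ is undominated by $S\cup\{x\}$ in $G-e$, then $y\in X$, the edge $xy$ lies in $E_G(X,N_G[X])$ but not in $E_{G-e}(X,N_{G-e}[X])$, so the identity is not an isomorphism of the boundary subgraphs, and $y$ may already belong to $\Pt{G,k}{0}(S)$, so no choice of $j$ makes $Y\subseteq V_H\setminus\Pt{H,k}{j}(S)$. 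The paper's proof is genuinely more involved at this point: it takes the smallest $j_0$ with $\Pt{G,k}{j_0}(T)\supsetneq\Pt{G-e,k}{j_0}(T)$, shows that any vertex $v$ in the difference forces $e=uv$ where $u$ is the vertex that propagated to $v$ in $G$, adds that \emph{specific} endpoint $v$, and only then applies Lemma~\ref{l:general} with $i=j=j_0$, which also yields $\grad(G-e)\le\grad(G)$ in the equality case. So the choice of endpoint cannot be arbitrary, and identifying the correct one requires comparing the propagation in $G$ and $G-e$ step by step; this is the missing idea in your proposal.
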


\begin{proof}
We first prove that $\gpk(G-e)\le \gpk(G)+1$. Let $T$ be a $\gpk(G)$-set. If $T$ is also a \kPDS\ of $G-e$, then we are done, so assume $T$ is not. Let $j_0$ be the smallest integer $j$ such that $\Pt {G,k} j (T) \supsetneq \Pt {G-e,k} j (T)$, and let $v$ be a vertex in $\Pt {G,k} {j_0} (T) \setminus \Pt {G-e,k} {j_0} (T)$. Since $v \in \Pt {G,k} {j_0} (T)$, there exists some neighbour $u$ of $v$ in $\Pt {G,k} {j_0-1} (T)$ such that $|N_{G}[u]\setminus \Pt {G,k} {j_0-1} (T)| \le k$. Since $N_{G-e}[u]\subseteq N_G[u]$, $N_{G-e}[u]$ is also included in $\Pt {G-e,k} {j_0} (T)$, and $v$ cannot be a neighbour of $u$ any more, so $e=uv$. Thus we choose $S=T\cup \{v\}$ and using Lemma~\ref{l:general} (with the obvious mapping from $G-e$ to $G$, and $i=j=j_{0}$), we get that $S$ is a \kPDS\ of $G-e$ of order $\gpk(G)+1$. We also get that if $\gpk(G-e)= \gpk(G)+1$, then $\grad(G-e)\le \grad(G)$.

We now prove that $\gpk(G)-1\le \gpk(G-e)$. Let $T$ be a  minimum \kPDS\ of $H=G-e$ and $u$ be an end vertex of $e$. We apply Lemma~\ref{l:general}, for $S=T\cup \{u\}$ and  $i=j=0$. We get that $S$ is a \kPDS\ of $G$ and $\grad(G,S)= \grad (G-e,T)$. We infer that if $S$ is minimal (that is $\gpk(G)=\gpk(G-e)+1$), then  $\grad(G)\le \grad(G-e)$.
\end{proof}

As a first illustration of these possibilities, in the graph $G$ drawn in Fig.~\ref{G-e}, the removal of the edge $e_1$ decreases the $k$-power domination number, the removal of the edge $e_3$ increases it, and the removal of the edge $e_2$ does not have any consequence.
\begin{figure}[h]
 \begin{center}
 \begin{tikzpicture}[thick,scale=0.8,
	vertex/.style={circle,draw,inner sep=0pt,minimum size=1.5mm,fill=white!100}]


  \coordinate (c) at (6,-5);
  
	\foreach \i/\j in {1/2,2/4,4/7}{ 
 	  \coordinate (wp\i) at (0,-1-\j);
	  \coordinate (w\i) at (2,-1-\j);
	  \coordinate (u\i) at (3.5+0.5*\j,-8);
	  }
	\foreach \i in {1,2,3}{ 
	  \path (c) ++(4,0) +(-72*\i-72:2) coordinate (v\i);	}
	  \path (c) ++(4,0) +(-72*5-72:2) coordinate (v4);

	 
	\foreach \i in {1,2,4}{
		\foreach \j in {1,2,4}{
			\draw (wp\i) -- (w\j);
  	}
		\foreach \j in {1,2,3,4}{
			\draw (v\i) -- (v\j);
		}
  	\draw (w\i) -- (c);
  	\draw (u\i) -- (c);
  	
  }
  \draw (v1) -- (c);
  \draw (v2) -- (c);  

	\draw[dashed, thin]  (v3) to[bend left=72] (v4);  
  
  
	\draw (c) node[vertex,label=above:{$u$}] {};
 	\draw (v3) node[vertex] {}; 
 	
  \foreach \i in {1,2}{ 
  	\draw (wp\i) node[vertex,label=left:{$w'_\i$}] {};  
	  \draw (w\i) node[vertex,label=30:{$w_\i$}] {};
		\draw (u\i) node[vertex,label=below:{$u_\i$}] {};
		\draw (v\i) node[vertex,label=(-72*\i-72):{$v_\i$}] {};  
	}
  \def\i{4};
  {\draw (wp\i) node[vertex,label=left:{$w'_{k+1}$}] {};  
	   \draw (w\i) node[vertex,label=-30:{$w_{k+1}$}] {};
		\draw (u\i) node[vertex,label=below:{$u_{k+1}$}] {};
		\draw (v\i) node[vertex,label=(-72*\i-144):{$v_{k+2}$}] {};  
		}
  \def\i{3};
		\draw (v\i) node[vertex,label=(-72*\i-72):{$v_\i$}] {};  
		

	\path (wp1) -- (w1)	node[midway,above] {$e_1$};
	\path (w1) -- (c)	node[midway,above] {$e_2$};
	\path (c) -- (v2)	node[midway,above] {$e_3$};

	\draw (0,-6.5) node[rotate=90] {$\ldots$};  
	\draw (2,-6.5) node[rotate=90] {$\ldots$};  
	\draw (6.25,-8) node {$\ldots$};  
	\draw (c) ++(4,0) ++(-72*4-72:1.5) node[rotate=90] {$\ldots$};

 \end{tikzpicture}

 \caption{A graph $G$ where $\gpk(G)= 2 = \gpk(G-e_{2}), \gpk(G-e_{1})= 1, \gpk(G-e_{3})= 3$.}
 \label{G-e}
 \end{center}
\end{figure}

We now propose a graph family where the removal of an edge decreases the $k$-power domination number but increases its propagation radius arbitrarily. The graph $G_{k,r,a}$ represented in Fig.~\ref{fig.G-einc} satisfies $\gpk(G)=2$ and $\grad(G)=a+2$ (which is reached with the initial set $\{u,v\}$). If the edge $e$ is removed, we get a new graph whose $k$-power domination number is $1$ and which has propagation radius $(r+3)(a+1)+2$. So no upper bound can be found for $\grad(G-e)$ (in terms of $\grad(G)$) when the removal of an edge decreases the power domination number.

Similar graphs where the edge removal increases the power domination number can also be found.
For example, in the graph $G_{k,r,a}$, if we remove the topmost path of length $a+2$ from $w$ to $v$, except for the vertex adjacent to $v$, we get another graph $G'$ such that $\{u\}$ is the only  $\gpk(G')$-set of order 1, and with $\grad(G')=(r+2)(a+1)+3$. Removing the same edge $e$, now $\{u,v\}$ is a minimum $\gpk(G'-e)$-set and $\grad(G'-e)=a+2$. This illustrates the fact that no lower bound can be found for $\grad(G-e)$ (in terms of $\grad(G)$) when the removal of an edge increases the power domination number.

\begin{figure}
\begin{center}
\begin{tikzpicture}[thick,scale=0.5,
	vertex/.style={circle,draw,inner sep=0pt,minimum size=1.5mm,fill=white!100}]

\def\k {3}; 
\def\kp {4}; 
\def\km {2}; 
\def\r {4}; 
\def\a {4}; 


\coordinate (c) at (0,0);

\foreach \i in {1,...,\kp}{
		\path (110+140*\i/\kp-70/\kp:2) coordinate (l\i);
		\draw (c)--(l\i);
		\draw (l\i) node[vertex] {};
		}

\draw (0,0) -- (2,0) coordinate (a1) -- (3,0) coordinate (a2) +(\a,0) coordinate (a3);
\draw[decorate,decoration={zigzag,amplitude=1,segment length=4}] (a2) -- (a3) node[midway,above] {$a$};

\path (7+2*\a,0) coordinate (cp);

\foreach \i in {1,...,\kp}{
	  \path (a3) -- ++(2,-\kp/2-1/2+\i) coordinate (b\i) -- +(\a,0) coordinate (c\i);
	  \draw (a3) -- (b\i) (c\i) -- (cp);
	  \draw[decorate,decoration={zigzag,amplitude=1,segment length=4}] (b\i) -- (c\i);
	  
	  }
\path (a3) node[above] {$w$} -- (b1) node[midway, below] {$e$};
\path (b2) -- (c2) node[midway,above] {$a$};

\foreach \j in {1,...,\r}{
	  \path (2,-\kp/2*\r+\kp*\j-\kp/2) coordinate (d\j);
	  \path (10+2*\a,-\kp/2*\r+\kp*\j-\kp/2) coordinate (e\j);
	  \draw (c) -- (d\j)--(e\j) -- (cp);
		\foreach \i in {1,...,\k}{
			\path (e\j) -- ++(2,-\kp/2+\i) coordinate (f\j\i) -- ++(\a,0) coordinate (g\j\i);
			\draw (e\j) -- (f\j\i);
			\draw[decorate,decoration={zigzag,amplitude=1,segment length=4}] (f\j\i) -- (g\j\i);
			}	  
	  }
\foreach \i in {1,...,\km}{
	  \draw (cp) -- ++(-80-40*\i/\km:1.8) coordinate (cp\i);
	  \draw (cp\i) node[vertex] {};

	  }

\draw[dotted] (cp) ++(-90:1.5) arc(-90:-135:1.5) +(0.5,-1.5) node {$k-1$} ;

\path (f\r\k) -- (g\r\k) node[midway,above] {$a$};

\draw 
	(e1) [bend left=20] to (g21)
	(e2) [bend left=20] to (g31)
	(e3) [bend left=20] to (g41);

\draw 
	(cp) [bend right=20] to (g1\k)
	(cp) [bend left=5] to (g21)
	(cp) [bend right=5] to (g31)
	(cp) [bend left=20] to (g41);	

\foreach \i in {1,...,\kp}{
    \draw (b\i) node[vertex] {};
    \draw (c\i) node[vertex] {};
}

\foreach \j in {1,...,\r}{
		\draw (d\j) node[vertex] {};
		\draw (e\j) node[vertex] {};
		\foreach \i in {1,...,\k}{
			\draw (f\j\i) node[vertex] {};
			\draw (g\j\i) node[vertex] {};
		}
}

\draw (a1) node[vertex] {};
\draw (a2) node[vertex] {};
\draw (a3) node[vertex] {};
\draw (c) node[vertex, label=$u$] {};
\draw (cp) node[vertex, label=$v$] {};

\draw (e4) node[above] {$x_1$};
\draw (e3) node[below] {$x_2$};
\draw (e1) node[below] {$x_r$}; 
 
\draw[dotted] (110:1.5) arc (120:250:1.5) node[midway, left] {$k+1$};
\draw[dotted] (a3) ++(60:1) node[above] {$k+1$} arc (60:-60:1);
\draw[dotted] (0.5,3.5) node[above] {$r+1$} arc (20:-20:10.5);
\draw[dotted] (e\r) ++(60:1) node[above] {$k$} arc (60:-60:1);
 
\end{tikzpicture}

 \caption{The graph $G_{k,r,a}$ for $k=3$ and $r=4$ (zigzag edges represent paths of length $a$).}
  \label{fig.G-einc}
\end{center}
\end{figure}
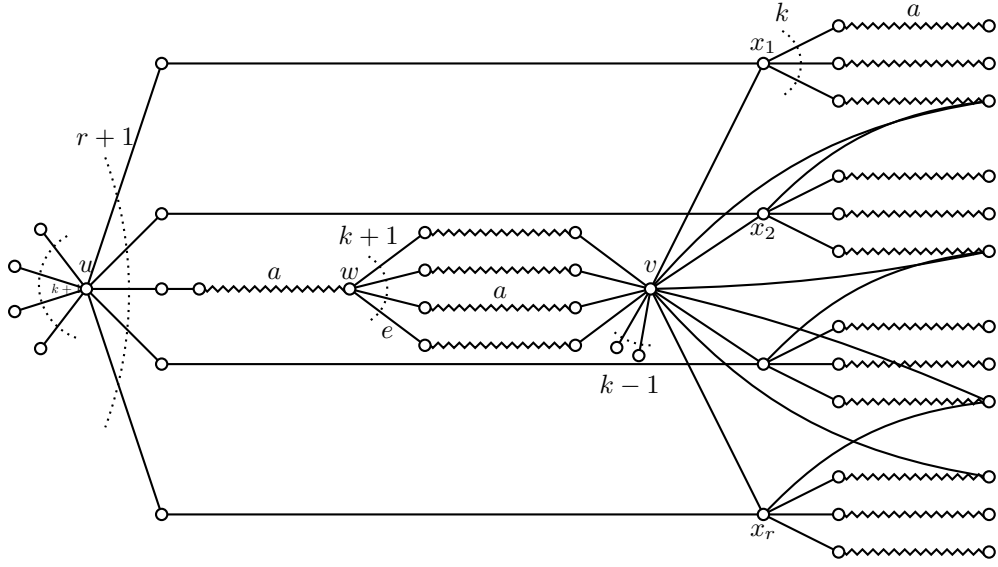

%
%
%
%
%

\medskip

We now characterize the graphs for which the removal of any edge increases the power domination number. Define a {\em $k$-generalized spider} as a tree with at most one vertex of degree $k+2$ or more. See Fig.~\ref{G.e} for an example.

\begin{figure}[h]\center
\begin{tikzpicture}[thick,scale=1,
	vertex/.style={circle,draw,inner sep=0pt,minimum size=1.5mm,fill=white!100}]

	\path (0,0) coordinate (c1) ++(1,0) coordinate (c2)
	    ++(1.5,0) coordinate (d1) ++(1,0) coordinate (d2)  
	    ++(1.5,0) coordinate (f1) ++(1,0) coordinate (f2)
	    ++(1.5,0) coordinate (g1) ++(1,0) coordinate (g2)
	    ;

	\path (0.5,1) coordinate (b1) ++(2.5,0) coordinate (b2)
	    ++(2.5,0) coordinate (e1) ++(2.5,0) coordinate (e2)
	    ;

	\path (1.75,2) coordinate (a1) ++(5,0) coordinate (a2)
	    ;

	\path (4.25,3) coordinate (v);

	
	\foreach \i/\j in {v/a,a1/b,a2/e,b1/c,b2/d,e1/f,e2/g}{
		\foreach \k in {1,2}{
	  	\draw (\i) -- (\j\k);
	  	}
    }
		    
	\foreach \i in {c,d,f,g}{	
		\foreach \j in {1,2}{
 			\draw (\i\j) node[vertex,label=below:{$\i_\j$}] {};
 			}
 		}	
	\foreach \i in {a,b,e}{	
 		\draw (\i1) node[vertex,label=left:{$\i_1$}] {};
 		}	
 	\draw (a2) node[vertex,label=right:{$a_{k+2}$}] {};
 	\draw (b2) node[vertex,label=right:{$b_{k+1}$}] {};
 	\draw (e2) node[vertex,label=right:{$e_{k+1}$}] {};

	\draw (v) node[vertex,label=above:{$v$}] {};
	
	
	\draw (1.75,0) node {$\ldots$} ++(2.5,0) node {$\ldots$} ++(2.5,0) node {$\ldots$};
	\draw (1.75,1) node {$\ldots$} ++(2.5,0) node {$\ldots$} ++(2.5,0) node {$\ldots$};	
	\draw (4.25,2) node {$\ldots$};
	
\end{tikzpicture}

  \caption{A $k$-generalized spider, $T$}
  \label{G.e}
\end{figure}
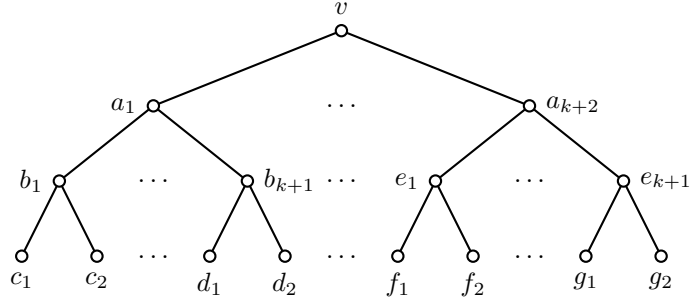

\begin{theorem}
Let $G$ be a graph. For each edge $e$ in $G$, $\gpk(G-e)>\gpk(G)$ if and only if $G$ is a disjoint union of $k$-generalized spiders. 
\end{theorem}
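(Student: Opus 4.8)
I would prove the two implications separately. The backward direction is short. For the forward direction I would first show that a graph with this ``every edge is critical'' property is a forest, and then that each of its components is a $k$‑generalized spider. The place I expect the real work is the first of these two reductions, which rests on an auxiliary lemma to the effect that deleting a non‑bridge edge never increases $\gpk$; this is delicate precisely because $\gpk$ is not monotone under taking subgraphs.

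\textbf{Backward direction.} Suppose $G$ is a disjoint union of $k$‑generalized spiders $H_1,\dots,H_t$. First, each $H_i$ satisfies $\gpk(H_i)=1$: take $S=\{c\}$ with $c$ a vertex of degree $\ge k+2$ (or arbitrary if there is none); the domination step monitors $N[c]$, and rooting $H_i$ at $c$ every vertex other than $c$ has at most $k$ children, so an easy induction on the depth shows the propagation rule reaches every vertex. Since monitored sets never cross connected components, $\gpk$ is additive over components, so $\gpk(G)=t$. If we now remove any edge $e$, say $e\in H_j$, then $H_j-e$ splits into two nonempty trees, each with $\gpk\ge 1$, so $\gpk(G-e)\ge (t-1)+1+1=t+1>\gpk(G)$.

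\textbf{Forward direction: $G$ is a forest.} Assume $G$ has the criticality property; since isolated vertices carry no edge and $K_1$ is a $k$‑generalized spider, we may assume $G$ has none. The key tool I would record first, using Lemma~\ref{l:general} with the roles of the two graphs taken to be $G-e$ and $G$ (and $T=S$), is the reformulation: \emph{for a $\gpk(G)$‑set $S$ and an edge $e=xy$, $S$ is a \kPDS\ of $G-e$ if and only if $x,y\in\Pt {G-e,k}\infty(S)$.} (The ``if'' follows because, once both endpoints of $e$ are monitored in $G-e$, the unmonitored part of $G-e$ carries the same induced structure as in $G$ and one has $N_G[S]\subseteq N_{G-e}[S]\cup\{x,y\}\subseteq\Pt{G-e,k}\infty(S)$, so Lemma~\ref{l:general} applies; the ``only if'' is trivial.) I would then prove the \emph{non‑bridge lemma}: if $e$ lies on a cycle then $\gpk(G-e)\le\gpk(G)$. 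Starting from a $\gpk(G)$‑set $S$: if $S$ already $k$‑power dominates $G-e$ we are done; otherwise, letting $R=V\setminus\Pt{G-e,k}\infty(S)$, one shows (tracking the first vertex of $R$ that gets monitored in the $G$‑process from $S$, and using that $\Pt{G-e,k}\infty(S)$ is stable under the propagation rule in $G-e$) that $e$ has exactly one endpoint in $R$ and that $R$ is a ``blocking cut'' for $S$; the cycle through $e$ then provides an alternate edge across this cut, and a controlled swap inside $S$ produces a $\gpk(G)$‑set that does monitor both endpoints of $e$ in $G-e$. Granting this lemma, every edge of $G$ is a bridge, so $G$ is a forest. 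I expect this non‑bridge lemma to be the main obstacle: naively one hopes $S$ itself survives edge deletion, but removing a cycle edge can cascade‑fail the propagation near that edge, so the swap argument has to be done carefully.

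\textbf{Forward direction: each component is a $k$‑generalized spider.} First the auxiliary fact: a tree $T$ has $\gpk(T)=1$ iff $T$ is a $k$‑generalized spider. One direction is the computation from the backward direction. For the converse, if $\{s\}$ is a \kPDS\ and $w\neq s$ had $\deg(w)\ge k+2$, then deleting $w$ leaves at least $k+2$ branches; all but the one containing $s$ have the property that their first monitored vertex is the neighbour of $w$ in that branch, reached only by propagation from $w$, so $w$ would have to propagate while having at least $k+1$ not‑yet‑monitored neighbours — impossible. Now let $T$ be a component of $G$ (a tree, by the previous step) and suppose it is \emph{not} a $k$‑generalized spider, i.e.\ $\gpk(T)\ge 2$ and $T$ has at least two vertices of degree $\ge k+2$. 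Root $T$ at one such vertex, let $v$ be a vertex of degree $\ge k+2$ none of whose proper descendants has degree $\ge k+2$, and let $e$ be the edge from $v$ to its parent. Then $T-e=A\sqcup B$ where $B$ is the subtree rooted at $v$; $B$ has at most one vertex of degree $\ge k+2$ (namely possibly $v$ itself), so $B$ is a $k$‑generalized spider and $\gpk(B)=1$. One then argues $\gpk(A)\le\gpk(T)-1$ — a $\gpk(T)$‑set, after discarding/relocating the vertices it places inside $B$, $k$‑power dominates $A$ with at least one vertex to spare, since $B$ was ``paid for'' separately — whence $\gpk(T-e)=\gpk(A)+\gpk(B)\le\gpk(T)$, contradicting the property. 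Hence every component of $G$ is a $k$‑generalized spider.

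\textbf{Where the difficulty sits.} The routine ingredients are the backward direction and the characterization $\gpk(T)=1\iff T$ a $k$‑generalized spider. The two genuinely delicate points are the non‑bridge lemma (the main obstacle, because of possible propagation cascade‑failure when a cycle edge is removed, which forces the blocking‑cut/swap analysis) and, secondarily, the inequality $\gpk(A)\le\gpk(T)-1$ in the last step, which needs a careful look at how a minimum \kPDS\ of a tree distributes among a pendant subtree and its complement.
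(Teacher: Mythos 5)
Your backward direction and the characterization $\gpk(T)=1\iff T$ is a $k$-generalized spider are fine, but the forward direction collapses at exactly the step you flag as the main obstacle: the ``non-bridge lemma'' is \emph{false}. The paper's own Fig.~\ref{G-e} is a counterexample: the edge $e_3=uv_2$ lies on the triangle $u v_1 v_2$ (so it is on a cycle), yet $\gpk(G-e_3)=3>2=\gpk(G)$. So it is not true that deleting a cycle edge never increases $\gpk$; removing $e_3$ destroys the propagation route through $u$ without creating a compensating one, and no ``swap inside $S$'' repairs this. Since your entire reduction ``every edge critical $\Rightarrow$ every edge is a bridge $\Rightarrow$ $G$ is a forest'' rests on this per-edge statement, the argument does not go through. (What you would actually need is the weaker claim that \emph{some} edge of \emph{some} cycle is non-critical, but your blocking-cut/swap sketch is not developed enough to establish even that, and it is not obvious how to.) A secondary gap: the inequality $\gpk(A)\le\gpk(T)-1$ in your last step is asserted, not proved; a $\gpk(T)$-set must indeed meet the pendant subtree $B$, but if it meets it in exactly one vertex, discarding that vertex may break the monitoring of $A$ (propagation can flow from $B$ into $A$ across $e$), and patching by adding the parent of $v$ only yields $\gpk(A)\le\gpk(T)$.

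For comparison, the paper avoids cycles-versus-bridges entirely. Starting from a $\gpk(G)$-set $S$ and a fixed vertex ordering, it records for each vertex $x\notin S$ the single edge along which $x$ was first dominated or first reached by propagation; the union $E'$ of these witness edges is automatically a spanning forest of $k$-generalized spiders on which $S$ still power dominates. It then shows by induction on propagation steps that deleting any edge $e\notin E'$ leaves $S$ a \kPDS\ of $G-e$ (the recorded witness for each vertex survives the deletion). Hence if every edge is critical, $E(G)=E'$ and $G$ is itself a disjoint union of $k$-generalized spiders. This one construction replaces both of your reductions and sidesteps the false lemma; if you want to salvage your outline, you would need to replace the non-bridge lemma by an argument of this witness-edge type rather than a topological one.
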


\begin{proof}

First observe that if $G$ is a disjoint union of $k$-generalized spiders, then its $k$-power domination number is exactly its number of components, and clearly $\gpk(G-e)>\gpk(G)$ for any edge $e$ in $G$. 

Let $G$ be a graph and let $S$ be a $\gpk(G)$-set. We label the vertices of $G$ with integers from $1$ to $n$ and consider the subsequent natural ordering on the vertices. For $i\ge 0$, we define $E'_i\subseteq E(G)$ as follows:
\begin{displaymath}\begin{cases}
E'_0=\left\{uv\in E(G) \mid v\in N(S)\setminus S, u=\min\{x \in N(v)\cap S\}\right\} \\
E'_{i+1}=\left\{uv \in E(G) \mid v \in \Pt{k} {i+1} (S)\setminus \Pt{k} {i} (S), u=\min\{x\in \Pt{k} {i} (S) \cap N(v), |N[x]\setminus \Pt {k} {i} (S)|\le k \} \right\} 
\end{cases}
\end{displaymath}
where the minima are taken according to the ordering of the vertices. Let $E'$ be the union of all $E'_i$ for $i\ge 0$. If we consider the edges of $E'$ as defined above oriented from $u$ to $v$, then the in-degree of each vertex not in $S$ is 1, of vertices in $S$ is 0. Also the graph is acyclic, and each vertex not in $S$ has out-degree at most $k$. Thus the graph induced by $E'$ is a forest of $k$-generalized spiders. Note also that $S$ is a \kPDS\ of this graph. We now assume that for any edge $e \in E(G)$, $\gpk(G-e)>\gpk(G)$, and we then prove that $E'=E(G)$. 

By way of contradiction, suppose there exists an edge $e$ in $E(G)$ and not in $E'$. We prove that $S$ is a \kPDS\ of $G-e$. For that, we prove by induction that for all $i, \Pt {G,k} {i} (S)\subseteq \Pt {G-e,k} {i} (S)$. First observe that $\Pt {G-e,k} {0} (S)=\Pt {G,k} {0} (S)$. 
Indeed, suppose there exists a vertex $x$ in $\Pt {G,k} {0} (S)$ but not in $\Pt {G-e,k} {0} (S)$, then $e$ has to be of the form $xv$ with $v\in S$. But since $e\notin E'_0$, there exists another vertex $u<v$ in $S$ such that $ux\in E'_0$, and $x\in \Pt {G-e,k} {0} (S)$.

Assume now $\Pt {G,k} {i} (S)\subseteq \Pt {G-e,k} {i} (S)$ for some $i\ge 0$, and let us prove that $\Pt {G,k} {i+1} (S)\subseteq \Pt {G-e,k} {i+1} (S)$. Let $x$ be a vertex in $\Pt {G,k} {i+1} (S)$. If $x\in \Pt {G,k} {i} (S)$, then by induction hypothesis, $x\in \Pt {G-e,k} {i+1} (S)$. If $x\notin \Pt {G,k} {i} (S)$, then there exists a vertex $v\in \Pt {G,k} {i} (S), x\in N_{G}[v]$ such that $|N_{G}[v]\setminus \Pt {G,k} {i} (S)|\le k$. Suppose $e\neq xv$. Then, since $N_{G-e}[v]\subseteq N_{G}[v]$ and by induction hypothesis, $v\in \Pt {G-e,k} {i} (S), x\in N_{G-e}[v]$ and $|N_{G-e}[v]\setminus \Pt {G-e,k} {i} (S)|\le k$, which implies $x\in \Pt {G-e,k} {i+1} (S)$. If $e=xv$ then by the choice of $E_{i+1}^{'}$, there exists a vertex $w\in \Pt {G,k} {i} (S), w<v, wx\in E_{i+1}^{'}$ such that $|N_{G}[w]\setminus \Pt {G,k} {i} (S)|\le k$ and $x\in N_{G}[w]\setminus \Pt {G,k} {i} (S)$. Then by induction hypothesis, $w\in \Pt {G-e,k} {i} (S), x\in N_{G-e}[w]$ and $|N_{G-e}[w]\setminus \Pt {G-e,k} {i} (S)|\le k$, which implies $x\in \Pt {G-e,k} {i+1} (S)$. Therefore $E(G)=E'$ and $G$ is indeed a union of $k$-generalized spiders.
\end{proof}

Observe that there also exist graphs for which the removal of any edge decreases the power domination number, though we did not manage to characterize them. The simplest example is the complete bipartite graph $K_{k+2,k+2}$, in which the removal of any edge decreases the $k$-power domination number from 2 to 1. This graph already played a noticeable role among the $k+2$-regular graphs, as observed in \cite{dohelomora-13}. Another example is the graph $K_{k+3,k+3}-M$, where $M$ is a perfect matching, in which we have $\gpk(K_{k+3,k+3}-M)=2$ and $\gpk((K_{k+3,k+3}-M)-e)=1$ for any edge $e$. More complex examples are the Cartesian product of  $K_4$ and $ W_5$, where the $k$-power domination number decreases from 3 to 2. A general family of graphs having this property is the Cartesian product of two complete graphs of the same order $K_a\cp K_a$, which shall be described in Section~\ref{s:hamming}.

\subsection{Edge contraction}

Contracting an edge in a graph may decrease its domination number by one, but cannot increase it (see \cite{huxu-12}). As we prove in the following, increasing of the power domination number may occur.

\begin{theorem}\label{t:propG/e}
Let $G$ be a graph and $e$ be an edge in $G$. Then 
\begin{displaymath}
\gpk(G)-1\le \gpk(G/e)\le \gpk(G)+1\,.\end{displaymath}
Moreover,
\begin{displaymath}
\begin{cases}
\mbox{if }\gpk(G)-1 = \gpk(G/e)\mbox{, then } \grad(G)\le \grad(G/e)\\
\mbox{if }\gpk(G/e) = \gpk(G)+1\mbox{, then } \grad(G/e)\le \grad(G).
\end{cases}
\end{displaymath}
\end{theorem}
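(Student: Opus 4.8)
The plan is to mimic the structure used for Theorem~\ref{t:propG-e} (edge removal), exploiting Lemma~\ref{l:general} in both directions, since edge contraction differs from edge removal only in that the two endpoints of $e$ are merged into a single vertex $v_{xy}$. Write $e=xy$ and let $v_{xy}$ denote the contracted vertex in $G/e$.

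For the upper bound $\gpk(G/e)\le \gpk(G)+1$: I would start from a $\gpk(G)$-set $T$. If $T$ is also a \kPDS\ of $G/e$ (identifying $x$ and $y$ with $v_{xy}$), we are done. Otherwise, let $j_0$ be the first step at which the monitored sets of $G$ and $G/e$ diverge, and pick a vertex $v$ monitored in $G$ at step $j_0$ but not in $G/e$. As in the edge-removal proof, $v$ must be monitored in $G$ via some neighbour $u\in\Pt{G,k}{j_0-1}(T)$ with at most $k$ unmonitored neighbours; the divergence must be caused by the contraction — concretely, merging $x$ and $y$ can only \emph{add} a neighbour to $v_{xy}$ (namely it may push its unmonitored-neighbour count above $k$, or it may create/destroy the adjacency between $u$ and $v$). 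In either case I claim adding $v$ (or the relevant endpoint) to $T$ repairs the propagation. The cleanest way to phrase this is: take $S=T\cup\{v\}$, and apply Lemma~\ref{l:general} with $G$ playing the role of the target graph and $G/e$ the source. The key check is that the subgraph of $G/e$ induced on the not-yet-monitored part is isomorphic to the corresponding subgraph of $G$ once $v$ (hence $v_{xy}$) is declared monitored: with $v$ monitored, the distinction between ``$v_{xy}$'' and ``$x$ and $y$ separately'' disappears on the unmonitored side, because all remaining unmonitored neighbours of $v_{xy}$ split into neighbours of $x$ and neighbours of $y$, and adjacencies among unmonitored vertices are untouched by the contraction. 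So $S$ is a \kPDS\ of $G/e$ of size $\gpk(G)+1$, and Lemma~\ref{l:general} with $i=j=j_0$ gives $\grad(G/e)\le\grad(G)$ in the equality case.

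For the lower bound $\gpk(G/e)\ge\gpk(G)-1$: let $T$ be a minimum \kPDS\ of $H=G/e$ and set $S=(T\setminus\{v_{xy}\})\cup\{x,y\}$ if $v_{xy}\in T$, or $S=T\cup\{x\}$ otherwise — in either case $|S|\le|T|+1=\gpk(G/e)+1$. Then $H'$ (the not-yet-monitored side of $H$ at step $0$) is isomorphic to $G'$ via the map sending $v_{xy}$ to $x$ (and $y$) appropriately, so Lemma~\ref{l:general} applies with $i=j=0$, giving that $S$ is a \kPDS\ of $G$ and $\grad(G,S)=\grad(H,T)$; hence $\gpk(G)\le\gpk(G/e)+1$, and if this is tight then $\grad(G)\le\grad(G/e)$. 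Actually one should be slightly careful that $S$ already ``monitors everything $T$ did'' — here it is cleanest to check directly that $\Pt{G,k}{0}(S)\supseteq$ (the preimage of) $N_H[v_{xy}]$ and invoke Lemma~\ref{l:general} on the complement, exactly as in the $G-e$ argument.

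The main obstacle I expect is bookkeeping the isomorphism in Lemma~\ref{l:general} precisely: when we contract $e=xy$, a vertex that was unmonitored and adjacent to $x$ in $G$ becomes adjacent to $v_{xy}$, so the map $\varphi$ must send the unmonitored neighbourhood of $x$-and-$y$ in $G$ onto the unmonitored neighbourhood of $v_{xy}$ in $G/e$, and this is only an isomorphism of the relevant subgraphs once $x$, $y$ (resp. $v_{xy}$) are themselves already monitored — which is exactly why in both directions we add a vertex to the dominating set that forces the endpoints into $\Pt{}{0}$. Getting the indices $i,j$ in Lemma~\ref{l:general} right (both equal to $j_0$ in the upper bound, both $0$ in the lower bound) and confirming that a vertex monitored in $G$ at the first divergence step is genuinely ``blocked'' in $G/e$ only through the contracted vertex is the crux; the rest is a near-verbatim adaptation of the proof of Theorem~\ref{t:propG-e}.
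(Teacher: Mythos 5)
Your lower-bound half is essentially workable and close to the paper's argument: the case $v_{xy}\in T$ is exactly the paper's ($S=T\setminus\{v_{xy}\}\cup\{x,y\}$, Lemma~\ref{l:general} at $i=j=0$). In the case $v_{xy}\notin T$ the paper is more careful than you are: it waits for the first step $j_0$ at which $v_{xy}\in \Pt{G/e,k}{j_0}(T)$, identifies the neighbour $w$ responsible, and adds the endpoint of $e$ \emph{not} adjacent to $w$. Your simpler choice $S=T\cup\{x\}$ at $i=j=0$ can in fact be justified, but not by the check you propose: the claim that $\Pt{G,k}{0}(S)$ contains the preimage of $N_H[v_{xy}]$ is false when $v_{xy}\notin T$ (neighbours of $y$ need not be dominated by $T\cup\{x\}$). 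What actually makes the lemma applicable is that no unmonitored vertex of $G$ is adjacent to $x\in S$, so the map fixing the common vertices and sending $y$ to $v_{xy}$ is an isomorphism of the two unmonitored-side subgraphs; you gesture at this but do not verify it.

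The genuine gap is in the upper bound $\gpk(G/e)\le\gpk(G)+1$. Unlike edge removal, the obstruction created by contraction is not a single missing adjacency that the first missed vertex can repair: the contracted vertex $v_{xy}$ pools the unmonitored neighbourhoods of $x$ and $y$, so it may be monitored and yet have more than $k$ unmonitored neighbours, and then adding one missed vertex $v$ does not restore propagation, nor are the subgraphs of Lemma~\ref{l:general} isomorphic (two boundary vertices $x,y$ of $G$ collapse onto one boundary vertex of $G/e$). Concretely, take $k=2$, vertices $t,x,y,a,b,c_1,\dots,c_4$, edges $tx,ty,xy,xa,xb,ac_1,ac_2,bc_3,bc_4$ and $yc_i$ for $1\le i\le 4$. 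Then $T=\{t\}$ is a $\gpof{2}(G)$-set ($x$ propagates to $a,b$, then $a,b$ to the $c_i$), the first divergence step misses exactly $a$ and $b$, but $S=\{t,a\}$ (or $\{t,b\}$) is \emph{not} a $2$-power dominating set of $G/e$: after the domination step the unmonitored vertices are $b,c_3,c_4$, and $v_{xy}$, the only vertex with unmonitored neighbours, has three of them, so propagation stalls. The paper's proof avoids this by taking $S=T\setminus\{x,y\}\cup\{v_{xy}\}$, so that all of $N_{G/e}[v_{xy}]$ is monitored from step $0$ and $v_{xy}$ never needs to propagate, and by applying Lemma~\ref{l:general} only at the first step $j_0$ with $N_G[x]\cup N_G[y]\subseteq \Pt{G,k}{j_0}(T)$, at which point the unmonitored parts of $G$ and $G/e$ coincide; this also yields $\grad(G/e)\le\grad(G)$ in the equality case. (Minor point: in this direction $G/e$ must play the role of the lemma's first graph, the one about which one concludes; you stated the roles the other way around.)
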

\begin{proof}
Let $e=xy$ be an arbitrary edge in $G$, we denote by $v_{xy}$ the vertex obtained by contraction of $e$ in $G/e$. We first prove that $\gpk(G/e)\ge \gpk(G)-1$. Let $T$ be a minimum \kPDS\ of $H=G/e$. Suppose first that the vertex $v_{xy}\in T$, then taking $S=T\setminus\{v_{xy}\}\cup \{x,y\}$, the conditions of Lemma~\ref{l:general} hold  from $i=j=0$ with the natural mapping from $G\setminus\{x,y\}$ to $H\setminus v_{xy}$. We infer that $S$ is a \kPDS\ of $G$ and $\grad(G,S)= \grad (G/e,T)$. 
We now consider the case when $v_{xy}\notin T$.
Let $j_{0}$ be the smallest $j$ such that $v_{xy}\in \Pt {G/e,k} j (T)$. Let $w$ be a neighbour of $v_{xy}$ that brought $v_{xy}$ into $\Pt {G/e,k} j (T)$, \textit{i.e.} if $j_0=0$, $w$ is a neighbour of $v_{xy}$ in $T$, otherwise when $j_0>0$,  $w$ is a neighbour of $v_{xy}$ in $\Pt {G/e,k} {j_0-1}(T)$ such that $|N_{G/e}[w]\setminus \Pt {G/e,k} {j_0-1} (T)| \le k$. By definition of edge contraction, the edge $wv_{xy}$ corresponds to an edge $wx$ or $wy$ in $E(G)$. If $wx\in E(G)$, then take $S = T\cup \{y\}$, otherwise take $S = T\cup \{x\}$. Then, applying Lemma~\ref{l:general} (with the natural mapping from $G\setminus\{x,y\}$ to $H\setminus v_{xy}$ and $i=j=j_0$), we get that $S$ is a \kPDS\ of $G$ and $\grad(G,S)= \grad (G/e,T)$. This implies that if $\gpk(G)= \gpk(G/e)+1$, then $\grad(G)\le \grad(G/e)$.  

We now prove that $\gpk(G/e)\le \gpk(G)+1$. Let $T$ be a minimum \kPDS\ of $G$ and let $S=T\setminus \{x,y\}\cup \{v_{xy}\}$. 
Let $j_{0}$ be the smallest $j$ such that $N_{G}[x]\cup N_{G}[y]\subseteq \Pt {G,k} j (T)$.
Here also, we can use Lemma~\ref{l:general} (with the natural mapping from $(G/e)\setminus v_{xy}$ to $G\setminus \{x,y\}$ and $i=j=j_0$), and get that $S$ is \kPDS\ of $G/e$. We also get that if $\gpk(G/e)= \gpk(G)+1$, then $\grad(G/e)\le \grad(G)$.    
\end{proof}

The bounds in Theorem~\ref{t:propG/e} are tight. For example, the lower bound holds for the graphs $K_{k+2,k+2}$ and $K_{k+3,k+3}-M$, where $M$ is a perfect matching, but also for the Cartesian product of two complete graphs of same order $K_a\cp K_a$, as is described in the next section. The upper bound is attained for example for the $k$-generalized spider $T$ in Fig.~\ref{G.e}, which satisfy $\gpk(T)=1$ and $\gpk(T/a_{1}b_{1})=2$ for $k\ge 2$.

\subsection{On the Cartesian product of twin complete graphs}\label{s:hamming}

The Cartesian product of two complete graphs of same (large enough) order is such that removing a vertex, removing an edge or contracting an edge decrease its power domination number. We here prove these properties.

\begin{observation}\label{t:gpkKa}
Let $a\ge 1$ and $G=K_a\cp K_a$. Then $\gpk(G)=\begin {cases}a-k& \textrm{if } a\ge k+2\,,\\
                       1 & \textrm{otherwise.}\end{cases}$   
\end{observation}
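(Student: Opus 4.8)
The plan is to establish the formula for $\gpk(K_a \cp K_a)$ by handling the two regimes separately. The case $a \le k+1$ is immediate: the product of two complete graphs on $a$ vertices is $(2a-2)$-regular and has diameter $2$, and in fact any single vertex $v$ dominates its row and column, i.e. $\Pt{k}{0}(\{v\})$ misses only the $(a-1)^2$ vertices sharing neither coordinate with $v$; each monitored vertex $w$ in the row or column of $v$ then has exactly $a-1 \le k$ unmonitored neighbours (the other vertices in $w$'s transversal line), so one propagation step finishes. Hence $\gpk(G)=1$ when $a \le k+1$.

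For $a \ge k+2$, I would prove the two inequalities $\gpk(G) \le a-k$ and $\gpk(G) \ge a-k$. For the upper bound, I would exhibit an explicit \kPDS\ of size $a-k$: take $S = \{(1,1),(2,2),\dots,(a-k,a-k)\}$, a partial diagonal. The domination step monitors all of rows $1,\dots,a-k$ and all of columns $1,\dots,a-k$; the only unmonitored vertices are those $(i,j)$ with $i,j > a-k$, forming a $k \times k$ subgrid. Now any already-monitored vertex $(i,j)$ with $i \le a-k < j$ lies in a monitored column but has $k$ unmonitored neighbours in its own row (namely $(i,j')$ for $j' > a-k$, $j'\ne j$ — wait, $(i,j)$ itself is monitored and $j>a-k$, so in row $i$ the unmonitored vertices are exactly $(i,j')$ with $j'>a-k$, of which there are $k$, but one of them could be $(i,j)$; since $(i,j)$ is monitored, there are exactly $k-1$... ). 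Let me instead use a vertex $(i,j)$ with $i,j \le a-k$: it has no unmonitored neighbours, useless. The right choice: a monitored vertex $(i, j_0)$ with $i \le a-k$ and $j_0$ a fixed index $> a-k$ has unmonitored neighbours exactly $\{(i,j) : j > a-k,\ j \ne j_0\}$, which has size $k-1 \le k$, so it propagates and monitors all of row $i$... but row $i$ is already monitored. The useful propagation is along columns: pick monitored $(i_0, j)$ with $j > a-k$ and $i_0 \le a-k$; its unmonitored neighbours are $\{(i,j): i > a-k\}$, size $k$, so it propagates and monitors column $j$ entirely. Doing this for every $j > a-k$ monitors the whole graph in one step. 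So $\grad(G, S) = 2$ and $\gpk(G) \le a-k$.

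For the lower bound $\gpk(G) \ge a-k$, I expect this to be the main obstacle. The idea is a counting/invariant argument on rows and columns. Call a row (or column) \emph{active} at step $i$ if it is entirely monitored. Initially, a set $S$ of size $s$ activates at most $s$ rows and at most $s$ columns. The key claim is that a propagation step can activate a new row only using a monitored vertex whose row is not yet active but whose at-most-$k$ unmonitored neighbours all lie in that row — meaning at most $k$ columns are still inactive; symmetrically for columns. One shows that once fewer than $a-s$ ... more precisely, I would argue: if at some step there are $r$ inactive rows and $c$ inactive columns with $r, c \ge 1$, then a vertex $(i,j)$ in an inactive row $i$ has at least $c$ unmonitored... no, its unmonitored neighbours within row $i$ number at least (number of inactive columns minus 1) and within column $j$ at least (inactive rows minus 1) only if those transversal vertices are unmonitored, which need not hold. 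The cleaner invariant: an unmonitored vertex $(i,j)$ must have both row $i$ and column $j$ inactive. So the unmonitored set is contained in (inactive rows) $\times$ (inactive columns), an $r \times c$ block. A monitored vertex that propagates and newly monitors something must have $\le k$ unmonitored neighbours; if it sits in an inactive row $i$, its row contributes $\ge (c-1)$ unmonitored neighbours only when all of row $i$ outside active columns is unmonitored — which is guaranteed since row $i$ inactive and those columns inactive forces... hmm, not every vertex in an inactive row is unmonitored. I would therefore track instead the quantity $r + c$ and show it can drop by at most... Actually the robust argument: the final all-monitored state has $r=c=0$; a propagation step monitoring a new vertex $(i,j)$ requires a monitored neighbour with $\le k$ unmonitored neighbours, and one shows by the block structure that this forces at that step min$(r,c) \le k$. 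Combined with $r = c = a$ at the start minus $\le s$ from domination, and the fact that each step decreases $\min(r,c)$ by at most... I would conclude $s \ge a-k$. I would write this out carefully as the heart of the proof; the bookkeeping that every vertex in an inactive row lying in an active column is monitored (true, since active column) while the genuinely unmonitored ones fill the $r\times c$ block is what makes the neighbour-count $\ge \min(r,c)-1$ go through for a propagating vertex in the block, giving $\min(r,c) - 1 \le k$ at the first step that breaks into the block, hence $a - s \le \min(r,c) \le k+1$, and a short separate argument removes the off-by-one to reach $s \ge a-k$.
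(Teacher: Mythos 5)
Your handling of the case $a\le k+1$ and of the upper bound $\gpk(K_a\cp K_a)\le a-k$ is correct and is essentially the paper's argument: the partial diagonal $S=\{(v_i,v_i)\mid 1\le i\le a-k\}$ leaves exactly a $k\times k$ block unmonitored after the domination step, and each monitored vertex lying in a column of that block (and in one of the first $a-k$ rows) has exactly $k$ unmonitored neighbours, so one propagation step completes the monitoring.

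The genuine gap is in the lower bound, which you yourself flag as unfinished and where the final count is off. If $|S|=s\le a-k-1$, then after the domination step the unmonitored set is \emph{exactly} the rectangle $R\times C$, where $R$ (resp.\ $C$) is the set of rows (resp.\ columns) containing no vertex of $S$, and $|R|,|C|\ge a-s\ge k+1$. Any vertex that could propagate into this rectangle must be monitored, hence lies outside it, so it sits either in a row of $R$ and a column not in $C$, or symmetrically; in the first case \emph{all} $|C|$ vertices of its row in the columns of $C$ are unmonitored neighbours, so it has at least $k+1>k$ unmonitored neighbours and cannot propagate (symmetrically in the other case). Thus the monitored set never grows beyond $N[S]$ and $S$ is not a \kPDS, giving $\gpk(G)\ge a-k$ with no off-by-one at all. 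Your sketch instead places the propagating vertex ``in the block'' (impossible, since at the first break-in the block is precisely the unmonitored set), only derives $\min(r,c)-1\le k$, i.e.\ $s\ge a-k-1$, and defers the missing unit to ``a short separate argument'' that is never given; as written this does not establish the claimed bound. The paper's proof is the cleaned-up form of this idea: it normalises $S$ into the first $a-k-1$ rows and columns, fixes the $(k+1)\times(k+1)$ block $B$ on the last $k+1$ rows and columns, observes that $B$ misses $N[S]$ and that every vertex outside $B$ has either $0$ or $k+1$ neighbours in $B$, and concludes that propagation can never reach $B$.
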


\begin{proof}
Denote by $\{v_1, \hdots , v_a\}$ the vertices of $K_a$. If $a<k+2$, then any vertex in $G=K_a\cp K_a$ is a minimum \kPDS. Now, assume $a\ge k+2$. Let $S=\{(v_i, v_i)\mid 1\le i\le a-k\}$. Then $\Pt k 0 (S)=\{(v_i,v_j)\mid i\le a-k \mbox{ or } j\le a-k\}$ and  the set of vertices $A=\{(v_i, v_j)\mid a-k+1\le i,j\le a\}$ is yet to be monitored. Since any vertex in $\Pt k 0 (S)\setminus A$ has either 0 or $k$ neighbours in $A$ and each vertex in $A$ is adjacent to some vertex in $\Pt k 0 (S)$, $\Pt k 1 (S)$ covers the whole graph. Thus $S$ is a \kPDS\ of $G$. Therefore, $\gpk(G)\le a-k$.

We now prove that $\gpk(G)\ge a-k$. By way of contradiction, suppose $S$ is a \kPDS\ of $G$ such that $|S|\le a-k-1$. Without loss of generality, assume that the elements of $S$ belong to the first $a-k-1$ columns and rows of $G$. Then the vertices in the set $B=\{(v_i, v_j)\mid a-k\le i,j\le a\}$ are not adjacent to any vertex in $S$, and $\Pt k 0 (S)\cap B=\emptyset$. Since any vertex in $G\setminus B$ has either 0 or $k+1$ neighbours in $B$, no vertices from this set may get monitored later on, a contradiction.     
\end{proof}

\begin{observation}
Let $a\ge k+2$ and $G=K_a\cp K_a$. Then $\gpk(G-v)=a-k-1$ for any vertex $v$ in $V(G)$.
\end{observation}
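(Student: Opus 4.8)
The plan is to establish the two inequalities $\gpk(G-v)\ge a-k-1$ and $\gpk(G-v)\le a-k-1$ separately. The lower bound requires nothing new: Theorem~\ref{o:propG-v} gives $\gpk(G-v)\ge \gpk(G)-1$, while Observation~\ref{t:gpkKa}, applied with $a\ge k+2$, gives $\gpk(G)=a-k$; hence $\gpk(G-v)\ge a-k-1$. The content is therefore entirely in the matching upper bound, for which I would exhibit an explicit \kPDS\ of $G-v$ of size $a-k-1$.

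Since $K_a\cp K_a$ is vertex-transitive (row permutations, column permutations and the coordinate swap are automorphisms), I may assume $v=(v_a,v_a)$. Identify the vertex set with $[a]\times[a]$, set $S=\{(i,i)\mid 1\le i\le a-k-1\}$ (a nonempty subset of $V(G-v)$ as $a\ge k+2$), and track the monitored sets. At step $0$, $\Pt{G-v,k}{0}(S)$ is the set of pairs $(i,j)$ with $i\le a-k-1$ or $j\le a-k-1$, so the unmonitored set is exactly $B^-=\{(i,j)\mid a-k\le i,j\le a\}\setminus\{(a,a)\}$. At step $1$, for each $i\le a-k-1$ the vertex $(i,a)$ is already monitored and has exactly the $k$ unmonitored neighbours $(a-k,a),\dots,(a-1,a)$ --- exactly $k$, and not $k+1$, precisely because $(a,a)$ has been deleted --- so it propagates and monitors all of column $a$ of $G-v$; symmetrically the vertices $(a,j)$, $j\le a-k-1$, monitor all of row $a$ of $G-v$, leaving $A=\{(i,j)\mid a-k\le i,j\le a-1\}$ unmonitored. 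At step $2$, for each $i$ with $a-k\le i\le a-1$ the vertex $(i,a)$ is now monitored and has exactly the $k$ unmonitored neighbours $(i,a-k),\dots,(i,a-1)$, hence propagates and monitors the whole of row $i$; letting $i$ range over $\{a-k,\dots,a-1\}$ monitors all of $A$. Thus $\Pt{G-v,k}{2}(S)=V(G-v)$, so $S$ is a \kPDS, $\gpk(G-v)\le a-k-1$ (and incidentally $\grad(G-v)\le 3$), completing the proof.

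There is no genuine obstacle here: the lower bound is free from earlier results and the upper bound is a short piece of propagation bookkeeping. The only points demanding care are (i)~checking that every ``exactly $k$ unmonitored neighbours'' count is right --- the whole construction hinges on the single deleted corner $(a,a)$ turning a count of $k+1$ into $k$ and so unlocking the first propagation step --- and (ii)~verifying that the degenerate smallest case $a=k+2$, where $|S|=1$, still goes through (it does: the lone vertex $(1,a)$ together with its neighbours already covers column $a$ of $G-v$, using $a-k\ge 2$).
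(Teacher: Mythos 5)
Your proof is correct and takes essentially the same route as the paper: the identical lower bound via Theorem~\ref{o:propG-v} and Observation~\ref{t:gpkKa}, and the same diagonal set of size $a-k-1$ (up to relabelling --- the paper deletes $(v_1,v_1)$ and uses $\{(v_i,v_i)\mid 2\le i\le a-k\}$), with the same key point that the deleted vertex removes the $(k+1)$-st unmonitored neighbour in its own row and column, unlocking the first propagation step. The only additions are cosmetic: your explicit step-by-step bookkeeping and the remark that $\grad(G-v)\le 3$.
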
 

\begin{proof}
Denote by $\{v_1, \hdots , v_a\}$ the vertices of $K_a$. We prove the result for $v=(v_1, v_1)$ which implies the result for any $v$ by vertex transitivity. First observe that $S=\{(v_i, v_i) \mid 2\le i\le a-k\}$ is a \kPDS\ of $G-v$. Indeed $\Pt k 0 (S)=\{(v_i,v_j)\mid  2\le i\le a-k \mbox{ or }2\le j\le a-k\}$ then vertices $(v_i,v_1)$ (resp. $(v_1,v_i)$)  with $2\le i\le a-k$ have only vertices $(v_j,v_1)$ (resp. $(v_1,v_j)$) with $a-k+1\le j\le a$ as unmonitored neighbours, which are thus all in $\Pt k 1 (S)$. The next propagation step covers the graph. Thus $S$ is a \kPDS\ of $G-v$ and $\gpk(G-v)\le a-k-1$. Now by Theorem~\ref{o:propG-v} and Observation~\ref{t:gpkKa}, $\gpk(G-v)\ge a-k-1$.  
\end{proof}

\begin{observation}
Let $a\ge k+2$ and $G=K_a\cp K_a$. Then $\gpk(G-e)=a-k-1$ for any edge $e$ in $E(G)$.
\end{observation}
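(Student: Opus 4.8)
The plan is to combine the general machinery already available with an explicit construction. By Theorem~\ref{t:propG-e} and Observation~\ref{t:gpkKa}, we have $\gpk(G-e)\ge \gpk(G)-1 = a-k-1$, so the whole content is the upper bound $\gpk(G-e)\le a-k-1$: we must exhibit a $k$-power dominating set of $G-e$ of size $a-k-1$. First I would set up coordinates, writing $V(K_a) = \{v_1,\dots,v_a\}$ and, using the vertex-transitivity of $K_a\cp K_a$ together with the edge-transitivity within a ``column'' $K_a$, reduce to a single edge, say $e = (v_1,v_1)(v_1,v_2)$.

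Next I would write down a candidate set. The natural guess, mirroring the proof of Observation~\ref{t:gpkKa}, is to take a ``diagonal'' set that misses one of the two endpoints of $e$, for instance $S = \{(v_i,v_i) : 2\le i\le a-k\}$, which has size $a-k-1$. The key steps are then to track the propagation: $\Pt k 0 (S) = \{(v_i,v_j) : 2\le i\le a-k \text{ or } 2\le j\le a-k\}$, exactly as in the vertex-removal computation, since $S$ does not dominate anything in row~$1$ or column~$1$ outside those indices. The unmonitored vertices after the domination step are the $(k+1)\times(k+1)$ ``corner'' block $A=\{(v_i,v_j): i,j\in\{1\}\cup\{a-k+1,\dots,a\}\}$. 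Now I would check the first propagation step: a vertex $(v_i,v_1)$ with $2\le i\le a-k$ has as its unmonitored neighbours exactly the vertices $(v_j,v_1)$ with $j\in\{1,a-k+1,\dots,a\}$ — that is $k+1$ of them in $G$ — but in $G-e$ the edge $(v_1,v_1)(v_1,v_2)$ has been removed, which does not touch this vertex, so it still has $k+1$ unmonitored neighbours and does \emph{not} propagate. So the diagonal set as stated is \emph{not} obviously enough; this is where the edge removal must be used more cleverly.

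The fix, and what I expect to be the main obstacle, is choosing $S$ so that the deleted edge actually reduces some vertex's unmonitored-neighbour count from $k+1$ to $k$. I would instead pick the $a-k-1$ vertices of $S$ to lie in rows $3,\dots,a-k+1$ and columns $3,\dots,a-k+1$ along a diagonal, i.e.\ $S=\{(v_{i+1},v_i): 2\le i\le a-k\}$ or a similar shift, arranged so that the two rows $1,2$ and columns $1,2$ are the ``deficient'' directions; then in $G-e$ the vertex $(v_1,v_2)$ (or a vertex in column~$2$) has lost a neighbour, so when row/column propagation reaches it, it sees only $k$ unmonitored neighbours and fires. Concretely I would verify: (i) $\Pt k 0(S)$ covers everything except a bounded corner region together with rows/columns $1,2$; (ii) one propagation step using vertices in the ``thick'' part of the grid monitors rows $3,\dots,a$ and columns $3,\dots,a$ completely, leaving only the $2\times 2\times\dots$ block $\{v_1,v_2\}\times\{v_1,v_2\}$-type corner; (iii) because edge $e$ is gone, a vertex of the form $(v_1,v_j)$ or $(v_i,v_2)$ on the boundary of that corner now has at most $k$ unmonitored neighbours and propagates, finishing the graph in one or two further steps. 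The routine but slightly fiddly part is bookkeeping the exact index sets so that every claimed ``$\le k$ unmonitored neighbours'' inequality holds; I would present it as a short case analysis on the position of the vertex doing the propagating, and note that $a\ge k+2$ guarantees $S$ is nonempty and the grid is large enough for the argument to make sense. Combined with the lower bound from Theorem~\ref{t:propG-e}, this gives $\gpk(G-e)=a-k-1$.
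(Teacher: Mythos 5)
Your frame is right—lower bound from Theorem~\ref{t:propG-e} and Observation~\ref{t:gpkKa}, upper bound by exhibiting a \kPDS\ of $G-e$ of size $a-k-1$—but the upper bound is where the content lies, and there your argument has a genuine gap caused by a miscalculation. With your normalization $e=(v_1,v_1)(v_1,v_2)$ and the diagonal $S=\{(v_i,v_i)\mid 2\le i\le a-k\}$, you only test the column-$1$ vertices $(v_i,v_1)$, $2\le i\le a-k$, which indeed still have $k+1$ unmonitored neighbours; but the vertex whose count is changed by the deletion is the \emph{monitored endpoint of $e$ itself}, namely $(v_1,v_2)$. In $G-e$ its unmonitored neighbours are exactly the vertices $(v_1,v_j)$ with $a-k<j\le a$, i.e.\ only $k$ of them, so it propagates along row~$1$ at the first step; then each $(v_2,v_j)$ with $a-k<j\le a$ sees only the $k$ remaining unmonitored vertices of column $j$, and the whole graph is monitored within three propagation steps. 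This is precisely the paper's proof (stated there with the transposed edge $e=(v_1,v_1)(v_2,v_1)$ and the propagators $(v_2,v_1)$ and $(v_j,v_2)$). So the ``obstacle'' you identify is illusory: the deleted edge is exploited through its monitored endpoint, the one vertex you never check.

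Because you dismiss the correct construction, what you offer instead is only a plan, not a proof: the set is specified up to ``a similar shift'', and the intermediate claims are inaccurate as stated. For $S=\{(v_{i+1},v_i)\mid 2\le i\le a-k\}$ the unmonitored region after the domination step is the full $(k+1)\times(k+1)$ block on rows $\{1,2\}\cup\{a-k+2,\dots,a\}$ and columns $\{1\}\cup\{a-k+1,\dots,a\}$, not a $\{v_1,v_2\}\times\{v_1,v_2\}$-type corner, and at the first propagation step only $(v_1,v_2)$ fires (every vertex of the ``thick'' part with an unmonitored column still has $k+1$ unmonitored neighbours), so your step (ii) fails as written. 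A shifted diagonal can in fact be pushed through with a longer cascade, but that verification is exactly the bookkeeping you defer, so as it stands the inequality $\gpk(G-e)\le a-k-1$ is not established. The simplest repair is to keep the plain diagonal and check the endpoint $(v_1,v_2)$, as the paper does.
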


\begin{proof}
Denote by $\{v_1, \hdots , v_a\}$ the vertices of $K_a$. By edge transitivity of $G$, we can assume that $e=(v_1, v_1)(v_2, v_1)$. Let $S=\{(v_i, v_i)\mid 2\le i\le a-k\}.$ Then $\Pt k 0 (S)=\{(v_i,v_j)\mid  2\le i\le a-k \mbox{ or }2\le j\le a-k\}$. Now the vertex $(v_2, v_1)$ has only $k$ unmonitored neighbours, namely the vertices $(v_j,v_1)$ for $a-k < j\le a$, and they all are in $\Pt k 1 (S)$. 
Then all vertices $(v_j,v_2)$ for $a-k < j\le a$ have only $k$ unmonitored neighbours and thus $\Pt k 2 (S)$ contains all vertices $(v_i,v_j)$ for $i\ge 2$. Then $\Pt k 3 (S)$ contains the whole graph and $\gpk(G-e)\le a-k-1$. The lower bound follows from Theorem~\ref{t:propG-e} and Observation~\ref{t:gpkKa}.
\end{proof} 

\begin{observation}
Let $a\ge k+2$ and $G=K_a\cp K_a$. Then $\gpk(G/e)=a-k-1$ for any edge $e$ in $E(G)$.
\end{observation}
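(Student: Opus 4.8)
The lower bound requires no new work: since $a\ge k+2$, Observation~\ref{t:gpkKa} gives $\gpk(G)=a-k$, and Theorem~\ref{t:propG/e} then yields $\gpk(G/e)\ge \gpk(G)-1=a-k-1$. So the whole task is to produce a \kPDS\ of $G/e$ of size $a-k-1$.

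The plan is to reduce, by edge-transitivity of $G=K_a\cp K_a$ (as in the preceding observations), to the case $e=(v_1,v_1)(v_2,v_1)$, and then to exhibit an explicit \kPDS. Writing $v_{xy}$ for the contracted vertex, the first thing I would record is its neighbourhood: in $G/e$ the vertex $v_{xy}$ is adjacent to every $(v_1,v_j)$ and every $(v_2,v_j)$ with $j\ge 2$ and to every $(v_i,v_1)$ with $i\ge 3$; moreover every ``line'' of $G/e$ not meeting the contracted edge --- all columns $v_j$ with $j\ge 2$, all rows $v_i$ with $i\ge 3$, and the two rows $v_1,v_2$ once $v_{xy}$ is adjoined --- still induces a clique, so away from column $v_1$ the usual same-row / same-column description of adjacency still applies. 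I would then take
\begin{displaymath}
S=\{v_{xy}\}\cup\{(v_i,v_i)\mid 3\le i\le a-k\}\,,
\end{displaymath}
of size $1+(a-k-2)=a-k-1$ (the diagonal part being empty, so $S=\{v_{xy}\}$, when $a=k+2$), and prove it monitors $G/e$.

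The argument I would run has three short steps. First I would compute $\Pt k 0 (S)$: since $v_{xy}$ monitors rows $v_1,v_2$ and column $v_1$, and each $(v_i,v_i)$ monitors row $v_i$ and column $v_i$, the still-unmonitored set is exactly
\begin{displaymath}
A=\{(v_i,v_j)\mid a-k+1\le i\le a,\ j\in\{2\}\cup\{a-k+1,\dots,a\}\}\,,
\end{displaymath}
a set of $k(k+1)$ vertices --- and this is the only place the hypothesis $a\ge k+2$ is used, namely to know that the rows $v_{a-k+1},\dots,v_a$ avoid $v_1,v_2$. Second, for each $j$ with $a-k+1\le j\le a$ the vertex $(v_1,v_j)$ lies in the fully monitored row $v_1$ and, apart from that row, has only the vertices of column $v_j$ as neighbours; exactly the $k$ of them with $a-k+1\le i\le a$ are still unmonitored, so $(v_1,v_j)$ propagates. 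Carrying this out for all such $j$ puts the whole block $\{(v_i,v_j)\mid a-k+1\le i,j\le a\}$ into $\Pt k 1 (S)$, leaving only $(v_{a-k+1},v_2),\dots,(v_a,v_2)$. Third, $(v_1,v_2)$ lies in the monitored row $v_1$ and now has precisely these $k$ vertices as unmonitored neighbours (here using $2<a-k+1$), so it propagates and $\Pt k 2 (S)=V(G/e)$. This gives $\gpk(G/e)\le a-k-1$, hence equality, and incidentally $\grad(G/e)\le 3$.

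The part I expect to be fiddly is the first step: one must keep in mind that $v_{xy}$ represents both $(v_1,v_1)$ and $(v_2,v_1)$, so that rows $v_1$ and $v_2$ overlap in $v_{xy}$ and column $v_1$ has lost one vertex, and one must check that every line invoked in the propagation (rows $v_1,v_2$, column $v_2$, and columns $v_{a-k+1},\dots,v_a$) is untouched by the contraction and hence still a clique on which the clean adjacency picture is valid. Once $A$ is pinned down correctly, the two propagation steps reduce to counts showing that a certain vertex has exactly $k$ unmonitored neighbours.
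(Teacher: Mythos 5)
Your proposal is correct and follows essentially the same route as the paper: the lower bound from Theorem~\ref{t:propG/e} together with Observation~\ref{t:gpkKa}, and the upper bound via the same set $S=\{v_{e}\}\cup\{(v_i,v_i)\mid 3\le i\le a-k\}$ with an explicit propagation argument. The only (harmless) difference is that you use two propagation steps where one suffices --- $(v_1,v_2)$ already has exactly $k$ unmonitored neighbours at step $0$, so column $v_2$ is absorbed simultaneously with the columns $v_{a-k+1},\dots,v_a$, as in the paper's one-step claim.
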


\begin{proof}
Denote by $\{v_1, \hdots , v_a\}$ the vertices of $K_a$. By edge transitivity of $G$, we can assume that $e=(v_1, v_1)(v_2, v_1)$ and we denote by $v_e$ the vertex in $G/e$ obtained by contracting $(v_1, v_1)$ and $(v_2, v_1)$. Let $S=\{v_e\} \cup \{(v_i, v_i)\mid 3\le i\le a-k\}$. Then $\Pt k 0 (S)$ contains all vertices $(v_i,v_j)$ with $1\le i\le a-k$ and $1\le j\le a$. After one propagation step, the whole graph is monitored so $\gpk(G/e)\le a-k-1$. The lower bound follows from Theorem~\ref{t:propG/e} and Observation~\ref{t:gpkKa}.  
\end{proof} 

\acknowledgements
The authors thank the Erudite programme of the Kerala State Higher Education Council, Government of Kerala, India for funding the visit of the first author during March 2014. The second author is supported by Maulana Azad National Fellowship (F1-17.1/2012-13/MANF-2012-13-CHR-KER-15793) of the University Grants Commission, India.

\nocite{*}
\bibliographystyle{abbrvnat}
\bibliography{dosevi}
\label{sec:biblio}

\end{document}